\newtheorem{theorem}{Theorem}
\newtheorem{corollary}[theorem]{Corollary}
\newtheorem{proposition}[theorem]{Proposition}
\newtheorem{lemma}[theorem]{Lemma}
\newtheorem{remark}[theorem]{Remark}
\newtheorem{definition}[theorem]{Definition}
\DeclareMathOperator*{\argmax}{arg\,max}
\DeclareMathOperator*{\argmin}{arg\,min}
\begin{document}
\title{Aging Bandits: Regret Analysis and Order-Optimal Learning Algorithm for Wireless Networks with Stochastic Arrivals}

\author{Eray Unsal Atay, Igor Kadota, and Eytan Modiano}

\maketitle

\begin{abstract}
	
We consider a single-hop wireless network with sources transmitting time-sensitive information to the destination over multiple unreliable channels. Packets from each source are generated according to a stochastic process with known statistics and the state of each wireless channel (ON/OFF) varies according to a stochastic process with unknown statistics. The reliability of the wireless channels is to be learned through observation. At every time slot, the learning algorithm selects a single pair (source, channel) and the selected source attempts to transmit its packet via the selected channel. The probability of a successful transmission to the destination depends on the reliability of the selected channel. The goal of the learning algorithm is to minimize the Age-of-Information (AoI) in the network over $T$ time slots. To analyze the performance of the learning algorithm, we introduce the notion of AoI regret, which is the difference between the expected cumulative AoI of the learning algorithm under consideration and the expected cumulative AoI of a genie algorithm that knows the reliability of the channels a priori. The AoI regret captures the penalty incurred by having to learn the statistics of the channels over the $T$ time slots. The results are two-fold: first, we consider learning algorithms that employ well-known solutions to the stochastic multi-armed bandit problem (such as $\epsilon$-Greedy, Upper Confidence Bound, and Thompson Sampling) and show that their AoI regret scales as $\Theta(\log T)$; second, we develop a novel learning algorithm and show that it has $O(1)$ regret. To the best of our knowledge, this is the first learning algorithm with bounded AoI regret.

\end{abstract}

\section{Introduction}\label{sec.Introduction}

Age-of-Information (AoI) is a performance metric that captures the freshness of the information from the perspective of the destination. AoI measures the time that elapsed since the generation of the packet that was most recently delivered to the destination. This performance metric has been receiving attention in the literature \cite{Book_AoI_17,Book19,yates2020age} for its application in communication systems that carry time-sensitive data. In this paper, we consider a network with $M$ sources transmitting time-sensitive information to the destination over $N$ unreliable wireless channels, as illustrated in Fig.~\ref{fig.Network}. 
Packets from each source are generated according to an i.i.d.\ stochastic process with known statistics and the state of each wireless channel (ON/OFF) varies according to an i.i.d.\ stochastic process with \emph{unknown statistics}. At every time slot, the learning algorithm schedules a single pair (source, channel) and the selected source attempts to transmit its packet via the selected wireless channel. When a packet with fresh information is successfully transmitted to the destination, the AoI associated with the selected source is reduced. The goal of the scheduler is to keep the information associated with every source in the network as fresh as possible, i.e.\ to minimize the AoI in the network. To decide which pair to select in a time slot, the scheduler takes into account: i) the packet generation processes at the $M$ sources; ii) the current values of AoI at the destination; and iii) the estimated reliability of the $N$ wireless channels. 

In this sequential decision problem, the outcomes of previous transmission attempts are used to estimate the reliability of the wireless channels. This statistical learning problem is closely related to the stochastic multi-armed bandit (MAB) problem in which the wireless channels are the bandits that give i.i.d.\ rewards and the scheduler is the player that attempts to learn the statistics of the bandits in order to maximize the reward accumulated over time. The main challenge in the stochastic MAB problem is to strike a balance between exploiting the bandit that gave the highest rewards in the past and exploring other bandits that may give high rewards in the future. To evaluate the performance of different learning algorithms, we define regret. Regret is the difference between the expected cumulative reward of a \emph{genie algorithm} (that knows the statistics of the bandits a priori) and the expected cumulative reward of the \emph{learning algorithm} under consideration. The regret captures the penalty incurred by having to learn the statistics of the bandits over time. Some well-known order-optimal learning algorithms in terms of regret are: $\epsilon$-Greedy, Upper Confidence Bound (UCB), and Thompson Sampling (TS). The regret of these policies was shown to increase no more than logarithmically in time \cite{FiniteMAB,KL-UCB,TS_MAB}, $O(\log T)$, and this bound was shown to be tight \cite{Assympt}. 

We refer to our problem as the \emph{Aging Bandit problem}. An important distinction between the stochastic MAB problem and the Aging Bandit problem is the reward structure. In the stochastic MAB problem, the player selects a bandit in each time slot and receives a reward that is i.i.d.\ over time and depends only on the probability distribution associated with the selected bandit. In the Aging Bandit problem, the scheduler selects a pair (source, channel) and the reward is the AoI reduction that results from a packet transmission to the destination. This reward depends on the state of the selected channel (which is i.i.d.\ over time), since a failed transmission gives zero reward, and it also depends on the history of previous packet deliveries and packet generations. In particular, if the selected source has recently delivered a fresh information update to the destination, then the reduction in AoI may be small. In contrast, if the selected source has not updated the destination for a long period, then the AoI reduction may be large. The reward structure of Aging Bandits is closely related to the AoI evolution (formally defined in Sec.~\ref{sec.Model}) which is history-dependent. This intricate reward structure has significant impact on the analysis of regret and on the development of learning algorithms when compared to the analysis of the traditional stochastic MAB. 

The literature on MAB problems is vast, dating more than eight decades \cite{Thompson33}. For surveys on different types of MAB problems, we refer the readers to \cite{surveyMAB1,RMAB_book,surveyMAB2,surveyMAB3}. Most relevant to this work are \cite{RegretQueueing,RegretThomas,ThomasPHD,igorTON18,igorINFOCOM,YuPin18,YuPinTMC,WhittleCSMA,WhittleVishrant,RegretAoI,CorrelatedRegretAoI,DecentralizedRegretAoI}. The authors in \cite{RegretQueueing,RegretThomas,ThomasPHD} considered the problem of minimizing the expected queue-length in a system with a single queue and multiple servers with unknown service rates. In \cite{RegretQueueing}, the authors introduced the concept of queue-length regret, developed a learning algorithm inspired by Thompson Sampling, and analyzed its regret. In \cite{RegretThomas,ThomasPHD}, the authors used information particular to the queue evolution to develop a learning algorithm with $O(1)$ queue-length regret. 

The authors in \cite{igorTON18,igorINFOCOM,YuPin18,YuPinTMC,WhittleCSMA,WhittleVishrant,RegretAoI,CorrelatedRegretAoI,DecentralizedRegretAoI} considered the problem of minimizing the average AoI in a single-hop wireless network with unreliable channels. In \cite{igorTON18,igorINFOCOM,YuPin18,YuPinTMC,WhittleCSMA,WhittleVishrant}, the authors posed the AoI minimization problem in a network with multiple sources and \emph{known channel statistics} as a restless MAB problem, developed the associated Whittle’s Index scheduling policy, and evaluated its performance in terms of the average AoI. In \cite{RegretAoI}, the authors considered the AoI minimization problem in a network with a single source-destination pair and unknown channel statistics, introduced the concept of AoI regret, and showed that the AoI regret of UCB and TS scale as $O(\log T)$. In \cite{CorrelatedRegretAoI}, the authors obtained similar results as in \cite{RegretAoI} for the more challenging case of correlated wireless channels. In \cite{DecentralizedRegretAoI}, the authors considered the AoI minimization problem in a network with multiple sources that generate and transmit fresh packets at every time slot through (possibly) different channels with unknown statistics. The authors in \cite{DecentralizedRegretAoI} showed that the AoI regret of a UCB-based distributed learning algorithm scales as $O(\log^2 T)$. An important modelling assumption common to \cite{RegretAoI,CorrelatedRegretAoI,DecentralizedRegretAoI} is that sources generate and transmit fresh packets at every time slot. The more realistic assumptions of random packet generation and scheduled transmissions have significant impact on the AoI evolution, on the analysis of AoI regret, and on the development of learning algorithms. For example, in Sec.~\ref{sec.Policy}, we leverage the random packet generation to develop a learning algorithm with $O(1)$ AoI regret.

In this paper, we study learning algorithms that attempt to minimize AoI in a network with multiple sources generating packets according to stochastic processes and transmitting these packets to the destination over wireless channels with initially unknown statistics. At every time slot, the learning algorithm schedules a single pair (source, channel) and the selected source attempts to transmit a packet through the selected channel. Note that the source policy, which selects a source at each time slot, and the channel policy, which selects the channel to be used in each time slot, can be naturally decoupled, as the optimal channel is independent of the source selected. 
In this paper, we focus on the exploration-exploitation dilemma faced by the channel policy. In particular, we consider learning algorithms employing the \emph{optimal source policy} and \emph{different channel policies}. Our main contributions include: 
\begin{itemize}
	\item we analyze the performance of channel policies based on traditional MAB algorithms including $\epsilon$-Greedy, UCB, and TS, and show that their AoI regret scales as $\Theta(\log T)$. These results generalize the analysis in \cite{RegretAoI} to networks with multiple sources generating packets randomly. The analysis of the AoI regret is more challenging in this network setting since the AoI evolution depends on both the source policy and the stochastic packet generation process. These challenges are discussed in Sec.~\ref{sec.Regret};
	\item we develop a novel learning algorithm and establish that it has $O(1)$ AoI regret. The key insight is that when packets are generated randomly, the learning algorithm can utilize times when the network has no packets to transmit, in order to learn the statistics of the channel. To the best of our knowledge, this is the first learning algorithm with bounded AoI regret.
\end{itemize}

The remainder of this paper is outlined as follows. In Sec.~\ref{sec.Model}, the network model and performance metrics are formally presented. In Sec.~\ref{sec.Regret}, we analyze the AoI regret of traditional learning algorithms. In Sec.~\ref{sec.Policy}, we develop an order-optimal learning algorithm and analyze its AoI regret. In Sec.~\ref{sec.Simulations}, we compare the AoI regret of different learning algorithms using simulations. The paper is concluded in Sec.~\ref{sec.Conclusion}. Some of the technical proofs have been omitted due to the space constraint, and will be made available in a technical report.

\section{System Model}\label{sec.Model}

Consider a single-hop wireless network with $M$ sources, $N$ channels and a single destination, as illustrated in Fig.~\ref{fig.Network}. 
Each source generates packets containing time-sensitive information and these packets are to be transmitted to the destination through one of the wireless channels. Let the time be slotted, with slot index $t\in\{1,2,\cdots,T\}$, where $T$ is the time horizon of this discrete-time system. The slot duration allows for a single packet transmission. We normalize the slot duration to unity.

\begin{figure}[t]
\begin{center}
\includegraphics[width=0.9\columnwidth]{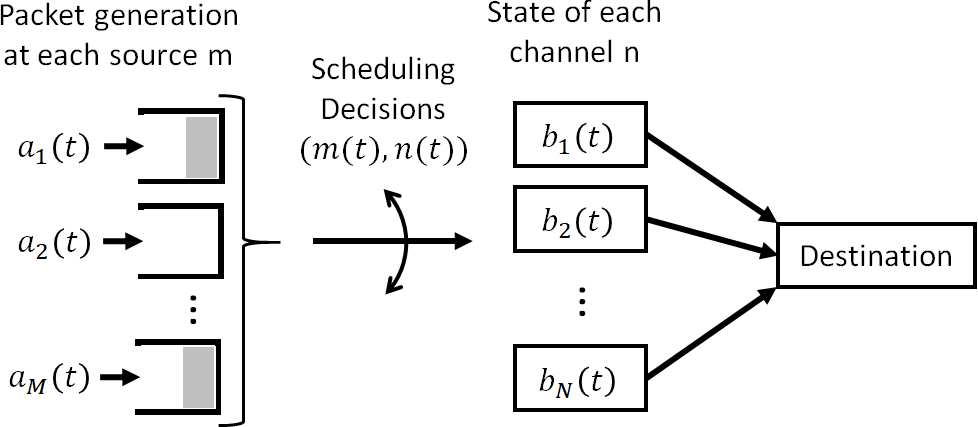}
\end{center}
\vspace{-0.4cm}
\caption{Illustration of the wireless network with $M$ sources, $N$ channels, and a destination.}\label{fig.Network}
\vspace{-0.4cm}
\end{figure}

At the beginning of every slot $t$, each source generates a packet with probability $\lambda\in(0,1)$. Let $a_m(t)\in\{0,1\}$ be the indicator function that is equal to $1$ when source $m\in\{1,2,\cdots,M\}$ generates a packet in slot $t$, and $a_m(t)=0$ otherwise. This Bernoulli process with parameter $\lambda$ is i.i.d.\ over time and independent across different sources, with $\mathrm{P}\left(a_m(t)=1\right)=\lambda,\forall m,t$. A packet that is generated in slot $t$ can be transmitted during the same slot $t$. We denote the vector of packet generations in slot $t$ by $\vec{a}(t) = \left[ a_1(t)\ \cdots\ a_M(t) \right]^\mathsf{T}$.

Each source has a transmission queue to store its packets. Sources keep \emph{only} the most recently generated packet, i.e.\ the freshest packet, in their queue. When source $m$ generates a new packet at the beginning of slot $t$, older packets (if any) are discarded from its queue. Notice that delivering the most recently generated packet provides the freshest information to the destination. This queueing discipline is known to optimize the AoI in a variety of contexts \cite{AoI_management,AoI_LIFO,AoI_LGFS19_2}. After a packet delivery from source $m$, the queue remains empty until the next packet generation from the same source. However, while the queue is empty, a \emph{dummy packet} can be transmitted for the purpose of probing the channels.

The networked system is empty during slot $t$ if there are no data packets available for transmission, i.e.\ if the $M$ queues are empty. Let $E(t)\in\{0,1\}$ be the indicator function that is equal to $1$ if the system is empty during slot $t$, and $E(t)=0$ otherwise. Notice that if there is a packet generation at the beginning of slot $t$, then the system is nonempty during slot $t$ and $E(t)=0$. Recall that when the system is empty, sources can still transmit dummy packets.


In a slot, the learning algorithm selects a single pair $(m,n)$, where $m\in\{1,2,\cdots,M\}$ is the index of the source and $n\in\{1,2,\cdots,N\}$ is the index of the wireless channel. Then, during this slot, source $m$ transmits a packet to the destination through channel $n$. If channel $n$ is ON, then the packet is successfully transmitted to the destination, and if channel $n$ is OFF, then the transmission fails. The learning algorithm does not know the channel states while making scheduling decisions, and the outcome of a transmission attempt during slot $t$ is known at the beginning of slot $t+1$. Let $b_n(t)\in\{0,1\}$ be the indicator function that represents the state of channel $n$ during slot $t$. The channel is ON, $b_n(t)=1$, with probability $\mu_n\in(0,1]$, and the channel is OFF, $b_n(t)=0$, with probability $1-\mu_n$. The channel state process is i.i.d.\ over time and independent across different channels. 


The \emph{reliability of channel} $n$ is represented by the probability of this channel being ON, $\mu_n$. Let $\vec\mu = \left[ \mu_1\ \cdots\ \mu_N \right]^\mathsf{T}$ be the vector of channel reliabilities. 
Let $\mu^*$ be the maximum channel reliability and let $n^*$ be the index of the corresponding channel, i.e.\,
	$\mu^* = \max_{n} \mu_n = \mu_{n^*}$. 
For simplicity, we assume that the optimal channel $n^*$ is unique. Naturally, if the channel reliabilities were known by the learning algorithm in advance, then the algorithm would select channel $n^*$ in every slot $t$. However, since the channel reliabilities $\vec\mu$ are initially unknown, the learning algorithm has to estimate $\mu_n$ using observations from previous transmission attempts, while at the same time attempting to minimize the AoI in the network. 
%
%
Next, we formulate the AoI minimization problem.

\subsection{Age of Information}
The AoI captures how old the information is from the perspective of the destination. Let $h_m(t)$ be a positive integer that represents the AoI associated with source $m$ at the beginning of slot $t$. By definition, we have $h_m(t):=t-\tau_m(t)$, where $\tau_m(t)$ is the generation time of the latest packet successfully transmitted from source $m$ to the destination\footnote{We define $\tau_m(t)=0$ prior to the first packet delivery from source $m$.}. If the destination does not receive a fresh packet from source $m$ during slot $t$, then in the next slot we have $h_m(t+1)=h_m(t)+1$, since the information at the destination is one slot older. In contrast, if the destination receives a fresh packet from source $m$ during slot $t$, then in the next slot the value of $\tau_m(t+1)$ is updated to the generation time of the received packet and the AoI is reduced by $\tau_m(t+1)-\tau_m(t)$. This difference is the ``freshness gain'' associated with the received packet. The evolution of $h_m(t)$ over time is illustrated in Fig.~\ref{fig.AoIevolution}. We define the vector of AoI in slot $t$ as $\vec{h}(t) = \left[ h_1(t)\ \cdots\ h_M(t) \right]^\mathsf{T}$.

\begin{figure}[t]
\begin{center}
\includegraphics[width=0.9\columnwidth]{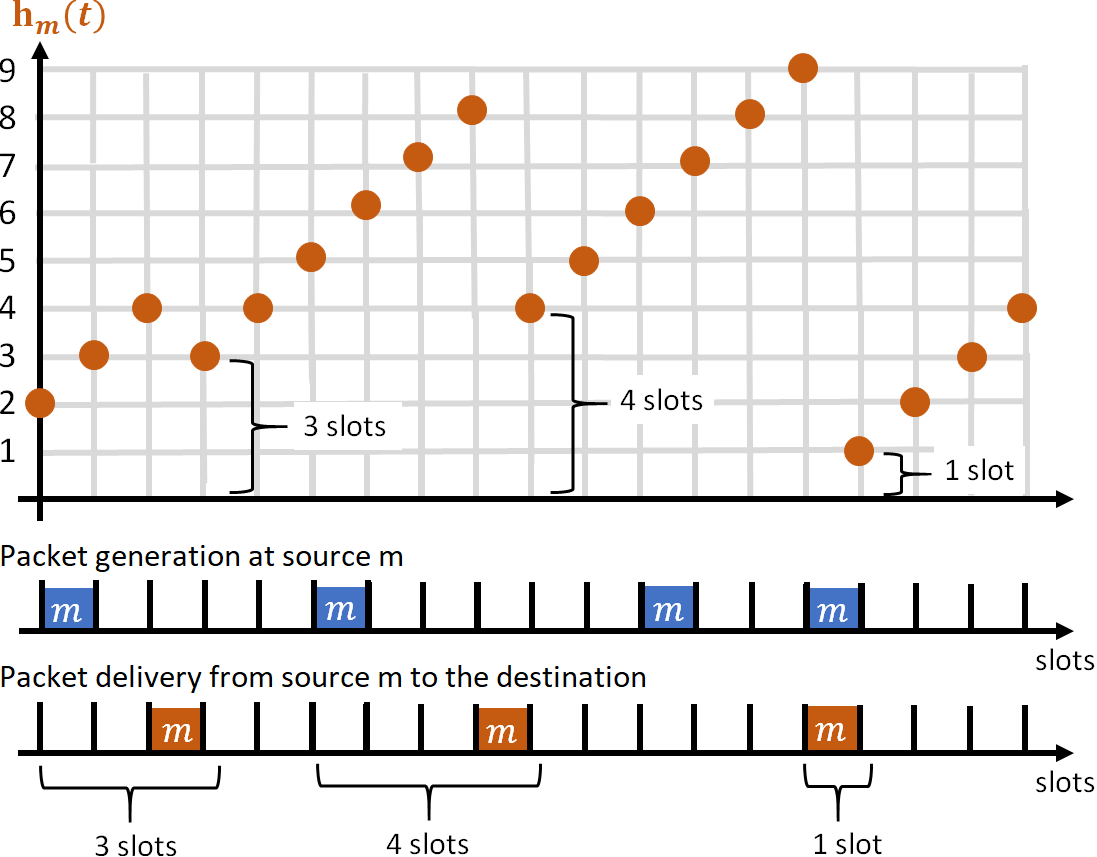}
\end{center}
\vspace{-0.3cm}
\caption{The blue and orange rectangles at the bottom represent packets generated at source $m$ and successful packet transmissions from source $m$, respectively. The orange curve shows the AoI evolution $h_m(t)$ associated with source $m$.} \label{fig.AoIevolution}
\vspace{-0.4cm}
\end{figure}

For capturing the information freshness of the entire network, we consider the \emph{expected total AoI} $\bar h(T)$, which is defined as the expected sum of the AoI over all sources and over time, namely
\begin{equation}\label{eq.totalAge}
	\bar h(T) = \mathrm{E}\left[\sum_{m=1}^{M} \sum_{t=1}^{T} h_m(t)\right] \; ,
\end{equation}
where the expectation is with respect to the randomness in the channel states $b_n(t)$, packet generation process $\vec a(t)$, and scheduling decisions $(m,n)$. 
The learning algorithm schedules pairs $(m,n)$ over time so as to minimize the expected total AoI $\bar h(T)$. Recall that in this sequential decision problem, the channel reliabilities $\mu_n$ are initially unknown by the learning algorithm and should be estimated over time. Next, we discuss the class of learning algorithms considered in this paper.


\subsection{Learning Algorithm}

In this section, we present three important concepts associated with the learning algorithm: the channel policy, the source policy, and the AoI regret. Prior to discussing these concepts, we introduce some notation. 
In each slot $t$, the learning algorithm selects a single source and a single channel. Let $m(t)$ be the index of the source selected during slot $t$ and let $n(t)$ be the index of the channel selected during slot $t$. Then, the pair selected in each slot can be denoted as $(m(t),n(t))$. Notice that the learning algorithm can be divided into two components: the source policy, which selects $m(t)$, and the channel policy, which selects $n(t)$. Let $b(t)=b_{n(t)}(t)$ be the state of the channel selected during slot $t$, and recall that $\vec{a}(t)$ is the vector of packet generations and $\vec{h}(t)$ is the vector of AoI in slot $t$. Using this notation, we define the \emph{channel policy} and the \emph{source policy}.

The \emph{channel policy} may (or may not) take into account the status of the transmission queues at the sources (in particular $E(t)$) in making scheduling decisions $n(t)$. Hence, we define two types of channel policies: queue-independent channel policies and queue-dependent channel policies. Let $\Pi_B$ be the class of admissible \emph{queue-independent channel policies} $\pi_b$. In slot $t$, an arbitrary policy $\pi_b\in\Pi_B$ selects $n(t)$ using information about the outcome of previous transmission attempts. In particular, the queue-independent channel history in slot $t$ is given by $H_B(t) = \{ n(1),b(1),\cdots,n(t-1),b(t-1) \}$. Let $\bar\Pi_B$ be the class of admissible \emph{queue-dependent channel policies} $\bar\pi_b$. In slot $t$, an arbitrary policy $\bar\pi_b\in\bar\Pi_B$ selects $n(t)$ using information about the outcome of previous transmission attempts and about the current status of the transmission queues. In particular, the queue-dependent channel history in slot $t$ is given by $\bar{H}_B(t) = H_B(t)\cup\{E(t)\}$. In Sec.~\ref{sec.Policy}, we show that this small amount of information, namely $E(t)$, can have a significant impact on the performance of the channel policy. 
It is easy to see that both the optimal queue-independent channel policy $\pi^*_b$ and the optimal queue-dependent channel policy $\bar \pi^*_b$ select the channel with highest reliability $\mu^*$ at every slot $t$. However, since the reliabilities $\vec\mu$ are not known a priori, the channel policies have to estimate $\vec\mu$ over time. In Sec.~\ref{sec.Regret}, we consider queue-independent channel policies and in Sec.~\ref{sec.Policy}, we consider queue-dependent channel policies. 


The \emph{source policies} considered in this paper are work-conserving, i.e.\, policies that never transmit dummy packets when there are undelivered data packets in the system. Let $\Pi_A$ be the class of admissible work-conserving source policies $\pi_a$. In slot $t$, an arbitrary source policy $\pi_a\in\Pi_A$ selects $m(t)$ using information about the current AoI and the generation times of the packets waiting to be transmitted at the sources' queues. In particular, the source history in slot $t$ is given by $H_A(t) = \{ \vec{a}(1),\vec{h}(1),\cdots,\vec{a}(t),\vec{h}(t)\}$. 
The optimal source policy $\pi^*_a\in\Pi_A$ is the transmission scheduling policy that minimizes the expected total AoI in \eqref{eq.totalAge}. A few works in the literature \cite{YuPin18,YuPinTMC,igorMobiHoc,igorTMC19} have addressed the problem of finding the transmission scheduling policy that minimizes AoI in wireless networks with stochastic packet generation and unreliable channels with \emph{known statistics}. Despite those efforts, a full characterization of the optimal source policy is still an open problem. 

In this paper, we consider learning algorithms $\pi$ that are a composition of a source policy and a channel policy $\pi=(\pi_a,\pi_b)$. Our goal is to study the exploration-exploitation dilemma faced by the channel policy. To that end, we analyze the AoI regret of learning algorithms employing the optimal source policy and different channel policies. To analyze the AoI regret of learning algorithms without the full characterization of the optimal source policy $\pi_a^*$, we derive lower and upper bounds on the regret.  
%
%
These bounds are discussed in Proposition~\ref{p1}, Proposition~\ref{t2}, and Theorem~\ref{t3}, where
we assumed that the optimal source policy $\pi_a^*$ 
is the same irrespective of the queue-independent channel policy $\pi_b$ under consideration, namely
\begin{equation}\label{eq.independenceAssumption}
	\pi^*_a = \argmin_{\pi_a\in\Pi_A} \mathrm{E}\left[ \sum_{m=1}^{M} \sum_{t=1}^{T} h^{(\pi_a,\pi_b)}_m(t) \right] \; , \; \forall\pi_b\in\Pi_B \; ,
\end{equation}
where $h^{(\pi_a,\pi_b)}_m(t)$ denotes the AoI associated with source $m$ in slot $t$ when the learning algorithm $\pi=(\pi_a,\pi_b)$ is employed. An analogous assumption is utilized for the case of queue-dependent channel policies $\bar\pi_b\in\bar\Pi_B$.

The \emph{AoI regret} of a learning algorithm $\pi$ with queue-independent channel policy $\pi_b$ is defined as the difference between the expected total AoI $\bar h^\pi (T)$ when $\pi=(\pi_a,\pi_b)$ is employed and the expected total AoI $\bar h^* (T)$ when the optimal algorithm $\pi^*=(\pi_a^*,\pi_b^*)$ is employed, namely
\begin{equation}\label{eq.AoIregret}
	R^\pi(T) = \mathrm{E}\left[ \sum_{m=1}^{M} \sum_{t=1}^{T} h^\pi_m(t) - \sum_{m=1}^{M} \sum_{t=1}^{T} h^*_m(t) \right] \; ,
\end{equation}
where the expectation is with respect to the randomness in the channel states $b(t)$, packet generation process $\vec a(t)$, and scheduling decisions $(m(t),n(t))$. The definition of AoI regret for a learning algorithm $\bar\pi$ with queue-dependent channel policy $\bar\pi_b$ is analogous to \eqref{eq.AoIregret}.
%
%
%
%
%
%
Next, we analyze the AoI regret of learning algorithms with \emph{queue-independent channel policies}.

\section{Regret Analysis}\label{sec.Regret}

The problem of learning channel reliabilities over time is closely related to the stochastic MAB problem. A natural class of channel policies to consider are traditional MAB algorithms such as $\epsilon$-Greedy, UCB, and TS.
%
%
In this section, we derive bounds on the AoI regret of learning algorithms that employ \emph{queue-independent channel policies}. Notice that the class of queue-independent channel policies $\Pi_B$ includes traditional MAB algorithms. We describe a learning algorithm employing TS as its channel policy in Algorithm~\ref{alg.TS}. 

\begin{algorithm}[t]
	\SetAlgoLined
	Initialization: time $t=1$, estimates $\hat\mu_n=0$, counters $T_n=0$, parameters $\alpha_n=\beta_n=1$, $\forall n\in\{1,\cdots,N\}$\;
	\While{$1\le t\le T$}{
		Optimal source policy selects $m\in\{1,2,\cdots,M\}$\;
		$\theta_n \sim \text{Beta}(\alpha_n,\beta_n)$\;
		$n=\argmax_{n'\in\{1,\cdots,N\}}\theta_{n'}$\;
		Source $m$ transmits packet through channel $n$ and observes channel state $b$\;
		\eIf{$b=1$}{
			$\alpha_n = \alpha_n + 1$\;
		}{
			$\beta_n = \beta_n + 1$\;
		}
		Compute new estimate $\hat\mu_n = \dfrac{\hat\mu_n T_n + b}{T_n+1}$\;
		$T_n=T_n+1$\;
		$t=t+1$\;
	}
	\caption{Learning Algorithm employing TS as its channel policy}\label{alg.TS}
\end{algorithm}


Scheduling decisions of a learning algorithm $\pi$ might differ from those of $\pi^*$ both in the source and in the channel, which makes the analysis of the AoI regret $\sum_{t=1}^T \sum_{m=1}^M \mathrm{E}[ h_m^\pi(t)-h_m^{*}(t) ]$ challenging. To alleviate this challenge, we use stochastic coupling to create \emph{equivalent coupled channel state processes} that are simpler to analyze. Similar coupling arguments were employed in \cite{RegretQueueing,RegretAoI}.
\begin{remark}[Coupled Channel States]\label{r1}
	Let $\{U(t)\}_{t=1}^T$ be a sequence of i.i.d.\ random variables uniformly distributed in the interval $[0,1]$. In each slot $t$, the channel states $b_n(t)$ are determined as follows
	\begin{equation}\label{coupling}
		b_n(t) = 1 \iff 0\le U(t) \le \mu_n \; , \; \forall n \; .
	\end{equation}
\end{remark}
By construction, the coupled channel states are no longer independent. In particular, if a channel is ON during slot $t$, then all channels with higher reliability $\mu_n$ are also ON during that slot. Notice that, in each slot $t$, each coupled channel $n$ has the same probability distribution as the associated original channel~$n$, namely $\mathrm{P}\left(b_n(t)=1\right)=\mu_n,\forall n,t$. 
Hence, given the scheduling decision $(m(t),n(t))$ of $\pi$ during any slot $t$, the probability of a successful transmission attempt from source $m(t)$ through channel $n(t)$ is the same for both the coupled and original channel states. It follows that 
%
%
the probability distribution of $h_m^\pi(t)$ also remains the same for all slots $t$ and for all sources $m$ and, thus, the AoI regret $R^\pi(T)$ in \eqref{eq.AoIregret} also remains the same for both the coupled and original channel state processes.
%
\emph{For simplicity of analysis, henceforth in this paper, we assume that the channel state processes are coupled as described in Remark~\ref{r1}}.

In Proposition~\ref{p1}, Proposition~\ref{t2}, and Corollary~\ref{c3}, we derive bounds on the AoI regret of a learning algorithm $\pi$ with respect to its \emph{expected number of suboptimal channel choices}, namely
\begin{equation}\label{eq.suboptimalChoices}
	\mathrm{E}[K^\pi(T)] = \mathrm{E}\left[\sum_{t=1}^T\mathbbm{1}\left\{n^\pi(t) \neq n^*\right\}\right] \; ,
\end{equation}
where $\mathbbm{1}\left\{n^\pi(t) \neq n^*\right\}=1$ if $n^\pi(t) \neq n^*$, and $\mathbbm{1}\left\{n^\pi(t) \neq n^*\right\}=0$ otherwise. 
%
We consider two classes of admissible learning algorithms
\begin{align}
\Pi = \left\{ \pi=(\pi_a,\pi_b) : \pi_a\in\Pi_A , \pi_b\in\Pi_B \right\} \; ; \label{eq.Pi}\\
\Pi^* = \left\{ \pi=(\pi_a,\pi_b) : \pi_a = \pi^*_a , \pi_b\in\Pi_B \right\} \; . \label{eq.Pi_star}
\end{align}
Both classes employ queue-independent channel policies. The difference is that $\Pi$ employs any admissible source policy $\pi_a\in\Pi_A$, while $\Pi^*$ employs the optimal source policy $\pi_a^*$. Naturally, we have $\Pi^*\subset\Pi$. 

\begin{proposition}[Lower Bound]\label{p1}
	For any given network configuration $(\lambda,\vec\mu)$, the AoI regret of any learning algorithm $\pi\in\Pi$ scales at least on the order of its expected number of suboptimal channel choices, namely\footnote{$f(t) = \Omega(g(t)) \iff \exists C>0\ \exists t_0\ \forall t>t_0 : f(t)\ge C\cdot g(n)$}
	\begin{equation}\label{eq.p1}
		R^\pi(T) = \Omega\left( \mathrm{E}\left[ K^{\pi}(T) \right] \right) \; .
	\end{equation}
\end{proposition}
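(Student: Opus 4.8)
The plan is to exploit the coupling of Remark~\ref{r1} and to compare the run of $\pi$ against a genie-aided ``ghost'' process that keeps $\pi$'s realized source schedule but always transmits on the optimal channel. On the common sample path $\{U(t),\vec a(t)\}$, let $g_m(t)$ be the AoI that results when, in every slot $t$, source $m^\pi(t)$ is scheduled exactly as under $\pi$ but the transmission succeeds whenever $U(t)\le\mu^*$ (rather than whenever $U(t)\le\mu_{n^\pi(t)}$). Writing $\bar g(T):=\mathrm{E}[\sum_m\sum_t g_m(t)]$, I would decompose
\begin{equation*}
R^\pi(T)=\big(\bar h^\pi(T)-\bar g(T)\big)+\big(\bar g(T)-\bar h^*(T)\big),
\end{equation*}
and show that the first bracket is $\Omega(\mathrm{E}[K^\pi(T)])$ while the second is nonnegative.

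For the domination I would first record the monotonicity fact that, for a fixed source schedule, enlarging the set of successful delivery slots can only lower the AoI of every source at every time, since the generation time of the freshest queued packet is nondecreasing in $t$. Under the coupling $\mu_{n^\pi(t)}\le\mu^*$, so every slot in which $\pi$ delivers for a source is also a slot in which the ghost delivers for that same source; hence the ghost's per-source delivery set contains $\pi$'s and $h_m^\pi(t)\ge g_m(t)$ pathwise for all $m,t$. In particular $\bar h^\pi(T)-\bar g(T)\ge0$.

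The crux is to make this gap linear in $\mathrm{E}[K^\pi(T)]$, and here the \emph{queue-independence} of $\pi_b$ is essential. Call slot $t$ \emph{productive} if (i) some source generates a packet, (ii) $\pi$ picks a suboptimal channel, and (iii) $\mu_{n^\pi(t)}<U(t)\le\mu^*$. On a productive slot the system is nonempty, so the work-conserving source policy schedules a source $m^\pi(t)$ carrying fresh (undelivered) information; the ghost delivers it while $\pi$ fails, which forces $\sum_m\big(h_m^\pi(t+1)-g_m(t+1)\big)\ge \gamma_{m^\pi(t)}(t)-\tau^\pi_{m^\pi(t)}(t)\ge1$, where $\gamma_m(t)$ is the freshest queued generation time. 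Since productive slots contribute at distinct times $t+1$ and all gaps are nonnegative, $\bar h^\pi(T)-\bar g(T)$ is at least the expected number of productive slots. Now the queue-independent choice $n^\pi(t)$ is $H_B(t)$-measurable and hence independent of the current arrivals $\vec a(t)$, while $U(t)$ is independent of both; conditioning on $H_B(t)$ gives, on $\{n^\pi(t)\neq n^*\}$, $\mathrm{P}(\text{productive}\mid H_B(t))=\big(1-(1-\lambda)^M\big)\big(\mu^*-\mu_{n^\pi(t)}\big)\ge c$ with $c:=\big(1-(1-\lambda)^M\big)\Delta_{\min}>0$ and $\Delta_{\min}:=\min_{n\neq n^*}(\mu^*-\mu_n)$. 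Summing over $t$ yields $\bar h^\pi(T)-\bar g(T)\ge c\,\mathrm{E}[K^\pi(T)]-O(1)$. This is exactly where the queue-dependent policies of Sec.~\ref{sec.Policy} escape the bound: by exploring only in empty slots they make conditions (i) and (ii) mutually exclusive.

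It remains to argue $\bar g(T)\ge\bar h^*(T)$, which I expect to be the main obstacle. The difficulty is that the ghost's schedule $\{m^\pi(t)\}$ is \emph{not} an admissible source policy for the optimal-channel system: it reacts to $\pi$'s AoI history rather than to its own, so optimality of $\pi_a^*$ over $\Pi_A$ does not apply directly. The resolution I would pursue is that $\{m^\pi(t)\}$ is nevertheless non-anticipating (measurable with respect to $U(1),\dots,U(t-1)$, $\vec a(1),\dots,\vec a(t)$, and the policy's own coins), and that under the optimal channel the AoI evolution is a Markov decision process whose state is the AoI/queue configuration and whose future transitions are independent of the past. Consequently the Markov-optimal policy $\pi_a^*$ is optimal among \emph{all} non-anticipating scheduling rules, so feeding the ghost's rule into the optimal channel cannot beat $\bar h^*(T)$; this gives $\bar g(T)\ge\bar h^*(T)$. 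Combining the two brackets yields $R^\pi(T)\ge c\,\mathrm{E}[K^\pi(T)]-O(1)=\Omega(\mathrm{E}[K^\pi(T)])$, as claimed. The monotonicity and independence steps are routine; the delicate point, and the one I would develop most carefully, is relating the non-anticipating ghost schedule back to the Markov-optimal $\pi^*$.
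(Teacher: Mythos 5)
Your proof follows essentially the same route as the paper: your ghost process is exactly the paper's auxiliary algorithm $\hat\pi^*$ (optimal channel, $\pi$'s realized source schedule, dummy packets when the selected queue is empty), your two-bracket decomposition is the paper's reduction to the auxiliary AoI regret, and your productive-slot argument is the paper's ``each suboptimal channel choice incurs a penalty lower bounded by a constant.'' The step you flag as delicate---showing $\bar g(T)\ge\bar h^*(T)$, i.e.\ that the non-anticipating (and non-work-conserving) ghost schedule cannot beat $\pi_a^*$---is precisely the step the paper also asserts without detail, so your treatment is at the same level of rigor as the published outline.
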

    
    
    \noindent\emph{Proof outline.} In addition to the suboptimal channel choices, source choices $m^\pi(t)$ of algorithm $\pi\in\Pi$ can also differ from the source choices $m^{*}(t)$ of $\pi^*$. To overcome this challenge, we construct an auxiliary algorithm $\hat\pi^*$ with optimal channel policy 
    and a source policy that selects the same source\footnote{Notice that if the selected source $m^{\pi}(t)$ has no packet in its transmission queue, then the auxiliary algorithm attempts to transmit a dummy packet.} $m^{\pi}(t)$ as $\pi$ in every slot $t$. Then, we focus on the \emph{auxiliary AoI regret} $\sum_{t=1}^T \sum_{m=1}^M \mathrm{E}[ h_m^\pi(t)-h_m^{\hat\pi^*}(t)]$ associated with the auxiliary algorithm $\hat\pi^*$, which we show to be not greater than the original AoI regret $\sum_{t=1}^T \sum_{m=1}^M \mathrm{E}[ h_m^\pi(t)-h_m^{*}(t) ]$. We then observe that each suboptimal channel choice of $\pi$ results in a penalty to the auxiliary AoI regret, and we show that this penalty is lower bounded by a constant. Using this constant, we obtain the desired lower bound on the original AoI regret in \eqref{eq.p1}. 
    The details are omitted due to the space constraint.
    %
    %
    %

\begin{proposition}[Upper Bound]\label{t2}
    For any given network configuration $(\lambda,\vec\mu)$, the AoI regret of any learning algorithm $\pi\in\Pi^*$ scales at most on the order of its expected number of suboptimal channel choices, namely\footnote{$f(t) = O(g(t)) \iff \exists C>0\ \exists t_0\ \forall t>t_0 : f(t)\le C\cdot g(n)$}
	\begin{equation}\label{eq.t2}
		R^\pi(T) = O\left( \mathrm{E}\left[ K^{\pi}(T) \right] \right) \; .
	\end{equation}
\end{proposition}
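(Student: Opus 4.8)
\noindent\emph{Proof proposal.} The plan is to upper bound the AoI regret $R^\pi(T)$ in \eqref{eq.AoIregret} by the expected number of suboptimal channel choices $\mathrm{E}[K^\pi(T)]$ in \eqref{eq.suboptimalChoices} through a sample-path coupling argument. First I would adopt the coupled channel states of Remark~\ref{r1} and additionally couple the packet-generation process $\vec a(t)$ and any internal randomness of $\pi_a^*$ across $\pi=(\pi_a^*,\pi_b)$ and $\pi^*=(\pi_a^*,\pi_b^*)$, so that both algorithms face identical arrivals and identical uniform variables $U(t)$. The central structural fact supplied by the coupling is that, because $\mu^*=\max_n\mu_n$, the optimal channel $n^*$ is ON whenever any channel is ON; hence in every slot $t$ the transmission outcome of $\pi$ agrees with the outcome $\pi^*$ obtains on channel $n^*$, \emph{unless} $\pi$ selects a suboptimal channel and $\mu_{n^\pi(t)}<U(t)\le\mu^*$. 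I call such a slot a \emph{bad event}: on it $\pi$ fails while $\pi^*$ succeeds, and the reverse never occurs. Since a bad event requires $n^\pi(t)\neq n^*$, their number is at most $K^\pi(T)$.

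Next I would replace $\pi$ by an auxiliary algorithm $\tilde\pi$ that employs $\pi_a^*$, always selects channel $n^*$, but is artificially forced to fail on exactly the bad-event slots of $\pi$. Note that the bad-event slots are determined by the channel history $H_B(t)$ and the variables $U(t)$ alone, so $\tilde\pi$ is well defined. A slot-by-slot induction then shows that $\tilde\pi$ and $\pi$ produce identical AoI vectors $\vec h(t)$: whenever the two share the same AoI history they make the same source selection (both run $\pi_a^*$, which reads only the source history), and on every slot the induced delivery outcome coincides by the definition of $\tilde\pi$. Consequently $\bar h^{\tilde\pi}(T)=\bar h^\pi(T)$ and $R^\pi(T)=\bar h^{\tilde\pi}(T)-\bar h^*(T)$. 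The gain from this reformulation is that $\tilde\pi$ and $\pi^*$ now differ \emph{only} through at most $K^\pi(T)$ forced failures on channel $n^*$: all exploration over suboptimal channels has been absorbed into these forced failures.

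It then remains to bound the total AoI inflation caused by the forced failures. I would do this by a telescoping hybrid argument: order the bad-event slots and interpolate between $\pi^*$ and $\tilde\pi$ through a chain of algorithms that introduce one additional forced failure at a time, each running $\pi_a^*$ with channel $n^*$ on all remaining slots. The regret then equals the sum of the incremental costs of these single forced failures, so it suffices to show that the expected incremental cost of one forced failure is bounded by a constant $C=C(\lambda,\vec\mu,M)$ that is independent of $T$ and of the slot at which the failure is inserted. Summing over the at most $K^\pi(T)$ increments and taking expectations would then yield $R^\pi(T)\le C\,\mathrm{E}[K^\pi(T)]=O(\mathrm{E}[K^\pi(T)])$, as claimed.

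The main obstacle is precisely this per-failure bound. Inserting one forced failure for source $m$ delays a delivery and raises $h_m$; because both compared algorithms run the optimal source policy, the resulting AoI gap can cascade into the source selections and thereby perturb the deliveries of the other $M-1$ sources. The key step is therefore a \emph{re-coupling} argument showing that the two processes return to a common AoI vector in finite expected time: under $\pi_a^*$ with channel $n^*$ the network is a positive-recurrent chain whose per-slot AoI has uniformly bounded expectation, a single perturbation is absorbed after an expected number of slots depending on $\lambda$, $\mu^*$ and $M$ but not on $T$, and the integrated excess AoI accumulated before re-coupling is consequently $O(1)$ in expectation. Establishing this bounded-transient property rigorously — in particular controlling the multi-source cascade until re-coupling — is the technically delicate part of the argument.
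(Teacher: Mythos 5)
Your reduction to the auxiliary algorithm $\tilde\pi$ (always channel $n^*$, forced failures at the bad-event slots) is sound: under the coupling of Remark~\ref{r1} the delivery outcome of $\pi$ differs from that of $n^*$ only on slots where $n^\pi(t)\neq n^*$ and $\mu_{n^\pi(t)}<U(t)\le\mu^*$, and since $\pi_b$ is queue-independent these slots are measurable with respect to $(H_B,U)$ alone, so your slot-by-slot induction giving $\vec h^{\tilde\pi}(t)=\vec h^{\pi}(t)$ goes through. The genuine gap is exactly the step you flag as ``technically delicate'': the constant bound on the incremental cost of one forced failure. This step cannot be completed with the information available. Inserting a forced failure perturbs $h_m$, and because the two algorithms in each link of your telescoping chain then feed \emph{different} AoI histories into $\pi_a^*$, they may select different sources from that point onward. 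Controlling the resulting cascade requires structural properties of $\pi_a^*$ --- positive recurrence, uniformly bounded per-slot AoI, and finite-expected-time merging of two trajectories started from different AoI vectors under common randomness --- none of which are established, and none of which can be established here, because the paper explicitly states that a full characterization of $\pi_a^*$ is an open problem ($\pi_a^*$ is defined as the minimizer of a finite-horizon objective and need not even be stationary). Your assertion that ``a single perturbation is absorbed after an expected number of slots depending on $\lambda$, $\mu^*$ and $M$'' is precisely the claim that needs proof, and it is the hardest part of your argument.

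The paper's proof avoids this cascade by building the auxiliary algorithm the other way around: $\hat\pi$ copies the \emph{source} selections $m^*(t)$ of $\pi^*$ and the \emph{channel} selections $n^\pi(t)$ of $\pi$. The inequality $R^\pi(T)\le\sum_{m}\sum_{t}\mathrm{E}[h_m^{\hat\pi}(t)-h_m^*(t)]$ then follows from the optimality of $\pi_a^*$ via assumption \eqref{eq.independenceAssumption}, and --- crucially --- the two processes being compared now select the \emph{same} source in every slot, so a suboptimal channel choice creates a discrepancy only for the source scheduled in that slot, and the discrepancy persists only until $\hat\pi$'s next successful delivery from that source; both its magnitude and its duration have bounded expectation by geometric/phase-type arguments, with no appeal to any property of $\pi_a^*$ beyond its optimality. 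To salvage your route you would either have to restrict attention to a concrete source policy for which the re-coupling bound can actually be proved, or reorganize the auxiliary construction as the paper does so that the source selections of the two compared processes coincide by fiat.
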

    \noindent\emph{Proof outline}. 
    Despite the fact that both learning algorithms $\pi\in\Pi^*$ and $\pi^*$ employ the same optimal source policy $\pi_a^*$, they might select different sources $m^{\pi}(t) \neq m^{*}(t)$ over time, due to their different channel policies. 
    To address this challenge, we use an approach similar to the proof of Proposition~\ref{p1}. We construct an auxiliary algorithm $\hat\pi\in\Pi^*$ 
    with a source policy that selects the same source $m^{*}(t)$ as $\pi^*$ in every slot $t$, and with a channel policy that selects the same channel $n^{\pi}(t)$ as $\pi$ in every slot $t$. Then, we show that the auxiliary AoI regret $\sum_{t=1}^T \sum_{m=1}^M \mathrm{E}[ h_m^{\hat\pi}(t) - h_m^{*}(t) ]$ associated with the auxiliary algorithm $\hat\pi$ is not lower than the original AoI regret $\sum_{t=1}^T \sum_{m=1}^M \mathrm{E}[ h_m^\pi(t)-h_m^{*}(t) ]$. To derive an upper bound on the auxiliary AoI regret, we analyze the penalty that results from each suboptimal channel choice of $\hat\pi$. 
    During a slot $t$ where $\hat\pi$ makes a suboptimal channel choice, if channel $n^{\hat\pi}(t)$ is OFF and channel $n^{*}$ is ON, then a \emph{discrepancy} is added to the difference between the AoI of $\hat\pi$ and the AoI of $\pi^*$, i.e.\, $h_m^{\hat\pi}(t+1) - h_m^{*}(t+1)>h_m^{\hat\pi}(t) - h_m^{*}(t)$.
    This discrepancy lasts until the next successful transmission of a packet from source $m$ by the auxiliary algorithm $\hat\pi$, after which the values of $h_m^{\hat\pi}(\cdot)$ and $h_m^{*}(\cdot)$ become equal\footnote{Recall from Remark~\ref{r1} that channel states are coupled. Hence, if channel $n^{\hat\pi}(t)$ is ON, then channel $n^{*}$ is also ON.}. We refer to the duration of the discrepancy as its \emph{length}. The penalty that results from a suboptimal channel choice is the product of the discrepancy and its length. We characterize the auxiliary AoI regret by expressing it as the sum of the penalties arising from suboptimal channel choices. Then, using discrete phase-type distributions, we upper bound the discrepancies and the lengths by constants (in the expected sense) to obtain the result in \eqref{eq.t2}. 
    The details are omitted due to the space constraint.

\begin{corollary}\label{c3}
For any given network configuration $(\lambda,\vec\mu)$, the AoI regret of any learning algorithm $\pi\in\Pi^*$ scales with its expected number of suboptimal channel choices, namely\footnote{$f(t) = \Theta(g(t)) \iff f(t) = O(g(t)) \land f(t) = \Omega(g(t)) \iff \exists C_1,C_2>0\ \exists t_0\ \forall t>t_0 : C_1\cdot g(n) \le f(t)\le C_2\cdot g(n)$} 
\begin{equation}
R^\pi(T) = \Theta\left( \mathrm{E}\left[ K^{\pi}(T) \right] \right) \; . 
\end{equation}
\end{corollary}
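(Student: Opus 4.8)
\noindent\emph{Proof proposal.} The plan is simply to combine the two preceding results, since the corollary is the conjunction of a matching lower and upper bound. The only thing to check carefully is that the two propositions apply to the \emph{same} class of algorithms as the corollary, because Proposition~\ref{p1} is stated for the larger class $\Pi$ while Proposition~\ref{t2} and the corollary concern the smaller class $\Pi^*$.

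First I would invoke the inclusion $\Pi^*\subset\Pi$, which follows immediately from the definitions in \eqref{eq.Pi} and \eqref{eq.Pi_star}: the only difference between the two classes is that $\Pi^*$ restricts the source policy to the optimal $\pi_a^*$, so every $\pi\in\Pi^*$ is in particular an admissible member of $\Pi$. Consequently, for any $\pi\in\Pi^*$ the hypotheses of Proposition~\ref{p1} are satisfied, and its conclusion gives the lower bound $R^\pi(T) = \Omega\!\left(\mathrm{E}\left[K^\pi(T)\right]\right)$.

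Second, since $\pi\in\Pi^*$ by assumption, Proposition~\ref{t2} applies directly and yields the matching upper bound $R^\pi(T) = O\!\left(\mathrm{E}\left[K^\pi(T)\right]\right)$. Finally I would apply the definition of $\Theta$ recalled in the footnote, namely that $f=\Theta(g)$ is equivalent to $f=O(g)\land f=\Omega(g)$; combining the two displayed bounds then delivers $R^\pi(T)=\Theta\!\left(\mathrm{E}\left[K^\pi(T)\right]\right)$, as claimed.

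I do not expect any genuine obstacle here, as the corollary is a routine consequence of the two propositions. The single point deserving attention is the class bookkeeping described above: one must confirm that the $\Omega$ bound, proved for the broader class $\Pi$, indeed covers the class $\Pi^*$ on which the $O$ bound and the corollary are stated, and this is exactly what the inclusion $\Pi^*\subset\Pi$ guarantees.
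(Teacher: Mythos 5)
Your proposal is correct and matches the paper's argument exactly: the corollary is obtained by combining Propositions~\ref{p1} and~\ref{t2}, with the inclusion $\Pi^*\subset\Pi$ ensuring the lower bound applies. The extra care you take with the class bookkeeping is sound but does not change the route.
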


Corollary~\ref{c3} follows directly from Propositions~\ref{p1} and ~\ref{t2}. Notice that the bounds in Proposition~\ref{t2} and Corollary~\ref{c3} are not valid for the broader class of learning algorithms $\Pi$ which includes suboptimal source policies. This is because suboptimal source choices may add to the AoI regret, possibly making it grow faster than $\mathrm{E}\left[ K^{\pi}(T) \right]$.

Prior to analyzing the AoI regret of learning algorithms that employ $\epsilon$-Greedy, UCB, and TS as their channel policy, we define $\alpha$\emph{-consistent learning algorithms} \cite{surveyMAB3,RegretQueueing} and discuss a few of their properties. Let $\mathrm{E}[T^\pi_n(T)]$ be the expected number of times channel $n$ is selected by $\pi\in\Pi$ in the first $T$ slots, namely
\begin{equation}\label{eq.channelChoices}
	\mathrm{E}[T_n^\pi(T)] = \mathrm{E}\left[\sum_{t=1}^T\mathbbm{1}\left\{n^\pi(t) = n\right\}\right] \; .
\end{equation}
\begin{definition}[$\alpha$-consistency]\label{def.alphaConsistency}
	For a given $\alpha\in(0,1)$, a learning algorithm $\pi\in\Pi$ is classified as \emph{$\alpha$-consistent} if, for any network configuration $(\lambda,\vec\mu)$, we have
		$E\left[ T^\pi_n(T) \right] = O(T^\alpha)$  
	for \emph{all suboptimal channels} $n\neq n^*$.	
\end{definition}
Intuitively, a learning algorithm $\pi\in\Pi$ is $\alpha$-consistent if its channel policy has good performance in \emph{every network configuration}. Consider a learning algorithm with a trivial channel policy that selects $n(t)=1$ in every slot $t$. In network configurations with $n^*=1$, this channel policy never selects suboptimal channels, i.e. $\mathrm{E}\left[ T^\pi_n(T) \right] = O(T^\alpha),\forall n\neq n^*$. However, in network settings with $n^*\neq 1$, this channel policy is such that $\mathrm{E}\left[ T^\pi_1(T) \right] = T$, which violates the definition of $\alpha$-consistency. In the remainder of this section, we focus on channel policies that have good performance in every network configuration. In particular, we analyze the AoI regret of $\alpha$-consistent learning algorithms with queue-independent channel policies. 

\begin{remark}[AoI regret of $\alpha$-consistent algorithms]\label{r2}
    In \cite[Corollary~$20$]{RegretQueueing}, the authors show that any learning algorithm $\pi\in\Pi$ that is $\alpha$-consistent has an expected number of suboptimal channel choices that scales as
	    $E\left[ K^{\pi}(T) \right] = \Omega(\log T)$,
    for any network configuration $(\lambda,\vec\mu)$. Hence, it follows from the lower bound in Proposition~\ref{p1} that the associated AoI regret scales as
    \begin{equation}\label{eq.alphaOmega}
        R^\pi(T) = \Omega(\log T) \; ,
    \end{equation}
    for any network configuration $(\lambda,\vec\mu)$.
\end{remark}
Notice that the lower bound in Remark~\ref{r2} applies to $\alpha$-consistent learning algorithms with queue-independent channel policies that do not know the statistics of the channels in advance. 

Learning algorithms that employ $\epsilon$-Greedy, UCB, and TS as their channel policy are known to have suboptimal channel choices scaling as $\mathrm{E}[K^\pi(T)] = O(\log T)$ for any network configuration $(\lambda,\vec\mu)$ \cite{FiniteMAB,furtherTS}, which implies that they are $\alpha$-consistent. Hence, it follows from the upper bound in Proposition~\ref{t2} and from \eqref{eq.alphaOmega} that the AoI regret of these learning algorithms scale as 
\begin{equation}
R^\pi(T) = \Theta(\log T) \; .
\end{equation}

In \cite{RegretAoI}, the authors derived lower and upper bounds on the AoI regret of learning algorithms employing queue-independent channel policies, including UCB and TS, in networks with a single source generating and transmitting fresh packets in every slot $t$. Propositions~\ref{p1} and \ref{t2} generalize the results in \cite{RegretAoI} to networks with multiple sources generating packets according to stochastic processes. The analysis of the AoI regret is more challenging in this network setting for the following reasons: 
i) the optimal source policy $\pi^*_a$ is unknown and there is no closed-form expression for the expected total AoI \eqref{eq.totalAge} of the optimal algorithm $\pi^*=(\pi_a^*,\pi_b^*)$; and 
ii) the learning algorithm under consideration $\pi=(\pi_a,\pi_b)$ can make suboptimal choices both in terms of sources $m(t)$ and channels $n(t)$, and these two types of suboptimal choices affect the AoI regret $R^\pi(T)$ differently. 
Next, we develop a learning algorithm that leverages information about the status of the transmission queues in making scheduling decisions $n(t)$, and show that this new learning algorithm has $O(1)$ AoI regret.

\section{Order-Optimal Learning Algorithm}\label{sec.Policy}
In this section, we develop a learning algorithm $\bar\eta\in\bar\Pi$ with a \emph{queue-dependent channel policy} that selects $n(t)$ using information about the outcome of previous transmission attempts, namely
$H_B(t) = \{ n(1),b(1),\cdots,n(t-1),b(t-1) \}$,
and about the current status of the transmission queues, $E(t)$. Then, we derive an upper bound on its AoI regret. In particular, we show that the AoI regret of $\bar\eta$ is such that $R^{\bar\eta}(T)=O(1)$. Notice that the only difference between the learning algorithms $\pi\in\Pi$ in Sec.~\ref{sec.Regret} and the order-optimal learning algorithm $\bar\eta$ is the knowledge of $E(t)$. This seemingly modest addition led to the reduction of the AoI regret from $R^{\pi}(T)=\Omega(\log T)$ to $R^{\bar\eta}(T)=O(1)$. To the best of our knowledge, this is the first learning algorithm with bounded AoI regret.

The key insight is that when packets are generated randomly, the learning algorithm $\bar\eta$ can utilize times when the network has no data packets to transmit, i.e.\ when $E(t)=1$, to transmit dummy packets and learn the statistics of the channels without incurring an opportunity cost. The order-optimal learning algorithm $\bar\eta=(\eta_a,\bar\eta_b)$ has optimal source policy $\eta_a=\pi^*_a$ and a channel policy $\bar\eta_b\in\bar\Pi_B$ that operates as follows: when the system is empty, $E(t)=1$, the policy chooses a channel uniformly at random and uses the outcome of the transmission attempt to update its estimates of the channel reliabilities and, when the system is nonempty, $E(t)=0$, the policy chooses the channel with the current highest estimated reliability. Notice that the channel policy only updates its estimates of the channel reliabilities when the system is empty. A similar channel policy was used in \cite{RegretThomas,ThomasPHD} to develop a learning algorithm with bounded queue-length regret. The order-optimal learning algorithm $\bar\eta$ is described in Algorithm~\ref{alg.Opt}. The upper bound on the AoI regret is established in the theorem that follows.

\begin{algorithm}[t]
	\SetAlgoLined
	Initialization: time $t=1$, estimates $\hat\mu_n=0$, counters $T_n=0$, $\forall n\in\{1,\cdots,N\}$\;
	\While{$1\le t\le T$}{
		Optimal source policy selects $m\in\{1,2,\cdots,M\}$\;
		\eIf{system is empty}{
			$n = \text{Unif}\{1,\cdots,N\}$\;
			Source $m$ transmits dummy packet through channel $n$ and observes channel state $b$\;
			$\hat\mu_{n} = \dfrac{\hat\mu_{n} T_{n} + b}{T_{n}+1}$\;
			$T_{n}=T_{n}+1$\;
		}{
			$n=\argmax_{n'\in\{1,\cdots,N\}}\hat\mu_{n'}$\;
			Source $m$ transmits data packet through channel $n$ and observes channel state $b$\;
		}
		$t=t+1$\;
	}
	\caption{Order-Optimal Learning Algorithm}\label{alg.Opt}
\end{algorithm}

\begin{theorem}\label{t3}
For any given network configuration $(\lambda,\vec\mu)$, the AoI regret of the order-optimal learning algorithm $\bar\eta$ is bounded, namely
\begin{equation}\label{eq.t3}
	R^{\bar\eta}(T)=O(1) \; .
\end{equation}
\end{theorem}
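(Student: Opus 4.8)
The plan is to reduce the AoI regret to a count of \emph{suboptimal channel choices made during nonempty slots}, and then to show this count is bounded by a constant. Define $\hat K^{\bar\eta}(T) = \sum_{t=1}^T \mathbbm{1}\{ n^{\bar\eta}(t) \neq n^*, E(t)=0\}$. First I would establish the analogue of Proposition~\ref{t2} for the queue-dependent policy $\bar\eta$: constructing an auxiliary algorithm that serves the same source as $\pi^*$ but uses $\bar\eta$'s channel in each slot, and observing that a discrepancy between the AoI of the auxiliary algorithm and that of $\pi^*$ can only be created during a \emph{nonempty} slot in which $\bar\eta$ selects a suboptimal channel that is OFF while $n^*$ is ON. Since $\bar\eta$ transmits only dummy packets during empty slots, those slots contribute no discrepancy regardless of the channel chosen. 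Bounding each such discrepancy and its length by constants in expectation (as in the proof of Proposition~\ref{t2}) yields $R^{\bar\eta}(T) = O(\mathrm{E}[\hat K^{\bar\eta}(T)])$. It then remains to prove $\mathrm{E}[\hat K^{\bar\eta}(T)] = O(1)$.

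Since $\bar\eta$ selects $n(t)=\argmax_{n'}\hat\mu_{n'}$ during every nonempty slot, a suboptimal nonempty choice occurs only when the estimates are misordered, so $\mathrm{E}[\hat K^{\bar\eta}(T)] \le \sum_{t=1}^T \mathrm{P}(\argmax_{n'}\hat\mu_{n'}(t) \neq n^*)$. The key structural fact to establish next is that the system is empty sufficiently often, namely that there is a constant $c_0>0$ and a window $W$ with $\mathrm{P}(E(t)=1)\ge c_0$ for all $t\ge W$. I would prove this with a queueing argument: conditioned on any history, with probability $(1-\lambda)^{MW}$ no source generates a packet during a window of $W$ slots, and during such a window the work-conserving source policy clears each of the at most $M$ backlogged queues, every served packet being delivered with probability at least $\mu_{\min}=\min_n \mu_n>0$; choosing $W$ large enough makes the probability of emptying the system within the window bounded below by a positive constant. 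Because this lower bound is conditional on the past, the sequence of empty-slot indicators sampled every $W$ slots stochastically dominates an i.i.d.\ Bernoulli$(c_0)$ sequence, so the number of empty slots $N_e(t)=\sum_{s\le t}E(s)$ satisfies $\mathrm{P}(N_e(t) < \delta t) \le e^{-c_1 t}$ for suitable constants $\delta,c_1>0$ by a Chernoff bound.

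Finally I would combine emptiness with concentration of the estimates. Estimates are updated only on empty slots, and the probed channel is drawn uniformly, so on the event $\{N_e(t)\ge \delta t\}$ each channel has been probed at least $\delta t/(2N)$ times except on an event of probability $\le e^{-c_2 t}$ (a second Chernoff bound for uniform sampling). Conditioned on a channel having been probed $k$ times, its estimate is an average of $k$ i.i.d.\ Bernoulli draws, so a Hoeffding bound together with a union bound over the random probe count shows that, once every channel has been probed at least $\delta t/(2N)$ times, the probability that any estimate deviates from its mean by more than $\Delta/2$, where $\Delta=\min_{n\neq n^*}(\mu^*-\mu_n)$, is at most $C e^{-c_3 t}$; on the complementary good event the estimates are correctly ordered and $\argmax_{n'}\hat\mu_{n'}(t)=n^*$. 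Collecting the three exponentially small terms gives $\mathrm{P}(\argmax_{n'}\hat\mu_{n'}(t)\neq n^*) \le C' e^{-c_4 t}$, whose sum over $t$ is finite, hence $\mathrm{E}[\hat K^{\bar\eta}(T)]=O(1)$ and $R^{\bar\eta}(T)=O(1)$.

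I expect the main obstacle to be the emptiness argument, i.e.\ proving that the system returns to the empty state in a positive fraction of slots despite its dependence on the unspecified optimal source policy and on unreliable channels. The reduction in the first paragraph and the concentration in the third are adaptations of existing arguments (Proposition~\ref{t2} and standard MAB concentration), but guaranteeing a linear-in-$t$ supply of exploration opportunities, uniformly over the network configuration and source policy, is where the random-arrival assumption must be exploited carefully.
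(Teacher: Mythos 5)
Your proposal is correct in its essentials, but it takes a genuinely different route from the paper in the key counting step. The paper does not work slot-by-slot: it partitions time into \emph{periods} delimited by the instants at which the system becomes empty, shows (Lemma~\ref{step1}) that periods in which the policy holds the optimal channel contribute nothing to the auxiliary regret, bounds the expected cumulative AoI of a single suboptimal period by a constant (Lemma~\ref{step2}, via a geometric bound on the AoI at the start of the period and a phase-type bound on the period length), and proves $\sum_{p}\mathrm{P}(\bar n(p)\neq n^*)<\infty$ (Lemma~\ref{step3}). The decisive advantage of indexing by periods is that each period contains at least one exploration slot by construction, so after $p$ periods the policy has at least $p$ samples and Hoeffding gives decay that is exponential \emph{in the period index} --- no statement about the rate at which empty slots accumulate in real time is required. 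Your slot-indexed sum $\sum_{t}\mathrm{P}(\argmax_{n'}\hat\mu_{n'}(t)\neq n^*)$ forces you to prove the extra structural fact that $N_e(t)=\Omega(t)$ with exponentially high probability; your window argument (no arrivals for $W$ slots with probability $(1-\lambda)^{MW}$, then clearing at most $M$ packets at per-slot success probability at least $\min_n\mu_n$, with the conditional bound yielding stochastic domination by an i.i.d.\ Bernoulli sequence) is sound and correctly isolates where $\lambda<1$ is exploited, but it is work the period decomposition avoids. The queueing content does not vanish from the paper's proof, however --- it reappears inside Lemma~\ref{step2}, where the period length must be shown to have bounded first and second moments; the two proofs simply spend the ``system empties often enough'' effort in different places (you, to make the per-slot error probability summable; the paper, to make the per-period cost bounded).

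Two details to tighten. First, in your reduction $R^{\bar\eta}(T)=O(\mathrm{E}[\hat K^{\bar\eta}(T)])$, the auxiliary algorithm should not literally copy the channel choices $n^{\bar\eta}(t)$: those were computed from $\bar\eta$'s own queue state, which can differ from the auxiliary system's queue state since the source selections differ, making ``suboptimal choice in a nonempty slot'' ambiguous between the two trajectories. Do as the paper does: let the auxiliary algorithm run the channel \emph{policy} $\bar\eta_b$ on its own queue state and define $\hat K$ on the auxiliary trajectory. Second, the number of probes of channel $n$ by time $t$ is not independent of the probe outcomes (outcomes drive the estimates, hence the channel used in nonempty slots, hence future emptiness), so the Hoeffding step must indeed be taken uniformly over all sample counts $k\ge \delta t/(2N)$, as you indicate; the resulting geometric series still sums to an exponentially small quantity, so your conclusion stands.
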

\begin{proof}
	Recall that the two components of the order-optimal learning algorithm are $\bar\eta = (\eta_a,\bar\eta_b)$. Similarly to the proof of Proposition~\ref{t2}, we start by constructing an auxiliary algorithm $\hat\eta = (\hat\eta_a,\bar\eta_b)$ which has a source policy $\hat\eta_a$ that selects the same source $m^{*}(t)$ as $\pi^*$ in every slot $t$. Since $\eta_a = \pi^*_a$ is optimal, it follows that $\hat\eta_a$ is suboptimal, which implies that
	\begin{align}
		\sum_{m=1}^M \sum_{t=1}^T \mathrm{E}[ h_m^{\bar\eta}(t)-h_m^{*}(t) ] \le \sum_{m=1}^M \sum_{t=1}^T \mathrm{E}[ h_m^{\hat\eta}(t)-h_m^{*}(t) ] \; . \label{aux_regret}
	\end{align}
	We denote the RHS of \eqref{aux_regret} as the \emph{auxiliary AoI regret}. Prior to deriving the upper bound on the auxiliary AoI regret, we introduce some definitions that are particular to the channel policy $\bar\eta_b$.
	
	Consider the time slots when the system becomes empty, i.e.\ time slots $t$ such that $E(t-1)=0$ and $E(t)=1$. We denote the time interval between two such slots as a \emph{period} 
	and we divide time $t\in\{1,2,\cdots,T\}$ into successive periods, with period index $p\in\{1,2,\cdots,P\}$. By definition, the system is empty, $E(t)=1$, in the beginning of each period $p$ and it remains empty until the first packet generation. Once the first packet is generated, the system becomes nonempty, $E(t)=0$, and it remains nonempty until the end of the period. Hence, each period $p$ has two phases: an \emph{empty phase} and a \emph{nonempty phase}, with each phase having at least one slot.
	Let $s_p$ and $f_p$ be the first and the last slots of period $p$, respectively, with $s_1=1$ and $s_{p+1} = f_p + 1,\forall p$. 
	Then, the cumulative AoI of source $m$ during period $p$ can be written as
	\begin{equation}\label{eq.cumy}
		y^{\hat\eta}_m(p) = \sum_{t=s_p}^{f_p} h^{\hat\eta}_m(t) \; .
	\end{equation}
	
	Recall from Algorithm~\ref{alg.Opt} that estimates of the channel reliabilities are only updated during empty phases. Within a nonempty phase, the estimates do not change and, thus, the selected channel also does not change. Let $\bar n(p)$ be the channel selected by policy $\bar\eta_b$ during the entire \emph{nonempty phase} of period $p$. 
	If $\bar n(p)=n^*$, we refer to period $p$ as an \emph{optimal period}. Otherwise, we refer to period $p$ as a \emph{suboptimal period}.
	Next, we derive an upper bound on the auxiliary AoI regret in terms of the expected AoI contributions of the suboptimal periods.
	
	\begin{lemma}\label{step1}
		The auxiliary AoI regret is upper bounded by
		\begin{align}
			&\sum_{m=1}^M \sum_{t=1}^T \mathrm{E}[ h_m^{\hat\eta}(t)-h_m^{*}(t)] \nonumber\\
			&\quad \le \sum_{m=1}^{M} \sum_{p=1}^{T} \mathrm{E}\left[ y^{\hat\eta}_m(p) \;\middle\vert\; \bar n(p)\neq n^* \right] \mathrm{P}\left(\bar n(p)\neq n^*\right) \; .\label{eqstep1}
		\end{align}
	\end{lemma}
	To establish Lemma~\ref{step1}, we first show that if period $p$ is an optimal period, then $h^{\hat\eta}_m(t) = h^{*}_m(t), \forall m, \forall t \in \{s_p,\cdots,f_p\}$, which implies that optimal periods do not contribute to the auxiliary AoI regret. Then, we obtain the upper bound in \eqref{eqstep1} by manipulating the expression of the auxiliary AoI regret.
	The complete proof of Lemma~\ref{step1} can be found in Appendix \ref{newA}.
	
	In Lemmas~\ref{step2} and \ref{step3}, we derive upper bounds on the first and second terms on the RHS of \eqref{eqstep1}, respectively.
	\begin{lemma}\label{step2}
		There exists a constant $C_y$ such that
		\begin{equation}\label{eqstep2}
			\mathrm{E}\left[ y^{\hat\eta}_m(p) \;\middle\vert\; \bar n(p)\neq n^* \right] \le C_y \; .
		\end{equation}
	\end{lemma}
	To establish Lemma~\ref{step2}, we first show that the cumulative AoI $y^{\hat\eta}_m(p)$ of source $m$ in period $p$ can be upper bounded by
	\begin{align}
		y^{\hat\eta}_m(p) &= \sum_{t=s_p}^{f_p} h^{\hat\eta}_m(t) \le \sum_{i=0}^{f_p-s_p} \left( h_m^{\hat\eta}(s_p) + i \right)\nonumber\\
		&= h_m^{\hat\eta}(s_p) [f_p - s_p + 1] + \dfrac12 [(f_p - s_p)^2 + f_p - s_p] \; .\label{cumsum}
	\end{align}
	Then, we derive an upper bound on the conditional expectation of \eqref{cumsum}. In particular, we
	show that $h_m^{\hat\eta}(s_p)$ can be upper bounded by a geometric random variable. Then, we show that the random variable $f_p - s_p$, which represents the length of period $p$, follows a discrete phase-type distribution. The upper bound on the conditional expectation of \eqref{cumsum} follows from the fact that the geometric random variable has finite second moment and the phase-type random variable has finite first and second moments. 
	The details are omitted due to the space constraint.
	
	%
	
	\begin{lemma}\label{step3}
		There exists a constant $C_p$ such that
		\begin{equation}
			\sum_{p=1}^{T} \mathrm{P}\left(\bar n(p)\neq n^*\right) \le C_p \; .\label{eqstep3}
		\end{equation}
	\end{lemma}
	To establish Lemma~\ref{step3}, we use Hoeffding's inequality to upper bound $\mathrm{P}\left(\bar n(p) = n\right)$ by an exponential function of $-p$, for every suboptimal channel $n$. The result in \eqref{eqstep3} follows directly from this upper bound. The complete proof of Lemma~\ref{step3} can be found in Appendix \ref{newC}.
	
	From the upper bound on the AoI regret in \eqref{aux_regret} and the results in Lemmas~\ref{step1}, \ref{step2} and \ref{step3}, we have
	\begin{align}
		R^{\bar\eta}(T) &\le \sum_{m=1}^M \sum_{t=1}^T \mathrm{E}[ h_m^{\hat\eta}(t)-h_m^{*}(t) ]\nonumber\\
		&\le \sum_{m=1}^{M} \sum_{p=1}^{T} \mathrm{E}\left[ y^{\hat\eta}_m(p) \;\middle\vert\; \bar n(p)\neq n^* \right] \mathrm{P}\left(\bar n(p)\neq n^*\right)\nonumber\\
		&\le C_y M C_p
	\end{align}
	which establishes the bound in \eqref{eq.t3}.
\end{proof}

In the particular case of a network with sources generating fresh packets at every slot $t$, i.e.\ $\lambda=1$, the algorithm $\bar\eta$ cannot utilize slots in which the system is empty to learn the channel reliabilities without incurring a cost in terms of AoI regret, which results in a $R^{\bar\eta}(T)$ that grows over time. 
The upper bound in Theorem~\ref{t3} is only valid for the network models described in Sec.~\ref{sec.Model}, in which $\lambda\in(0,1)$. 
Next, we evaluate the AoI regret of the different learning algorithms discussed in this paper using MATLAB simulations and we propose a heuristic algorithm that leverages the fast learning rates of TS and the bounded regret of the order-optimal algorithm.





\section{Simulations}\label{sec.Simulations}

In this section, we evaluate the performance of learning algorithms in terms of the AoI regret in \eqref{eq.AoIregret}. We compare learning algorithms employing the Age-Based Max-Weight source policy \cite[Sec.~5]{igorTMC19} and different channel policies, namely: i) $\epsilon$-Greedy; ii) UCB; iii) TS; iv) Optimal; and v) Hybrid. The Age-Based Max-Weight source policy selects, in each slot $t$, the source $m$ associated with the packet that gives the largest AoI reduction, $\tau_m(t+1)-\tau_m(t)$, if the transmission in slot $t$ is successful. Intuitively, this policy is selecting the source with highest potential reward in terms of AoI. In \cite{igorTMC19}, the authors evaluate the performance of the Age-Based Max-Weight source policy both analytically and using simulations, and show that it achieves near optimal AoI. The first three channel policies, namely $\epsilon$-Greedy, UCB, and TS, were discussed in Sec.~\ref{sec.Regret}. The Optimal policy is the order-optimal channel policy $\bar\eta_b$ developed in Sec.~\ref{sec.Policy}. The Hybrid policy employs TS for a fixed period in the beginning of the simulation and then employs the Optimal policy in the remaining slots.


We simulate a network with a time horizon of $T=10^5$ slots, $M=3$ sources, each generating packets according to a Bernoulli process with rate $\lambda$, and $N=5$ channels with reliabilities $\vec\mu = [0.4\ 0.45\ 0.5\ 0.55\ 0.6]^\mathsf{T}$. Figures~\ref{fig.Regret010} and \ref{fig.Regret075} show simulation results of the evolution of the AoI regret over time for $\lambda=0.1$ and $\lambda=0.75$, respectively. 
Figure~\ref{fig.Estimates} shows simulation results of the evolution of the reliability estimates associated with the channels with $\mu_4=0.55$ and $\mu_5=0.6$ over time for $\lambda=0.75$. 
Each data point in Figs.~\ref{fig.Regret010}, \ref{fig.Regret075}, and \ref{fig.Estimates} is an average over the results of $10^3$ simulations.


\begin{figure}[t]
\begin{center}
\includegraphics[width=0.9\columnwidth]{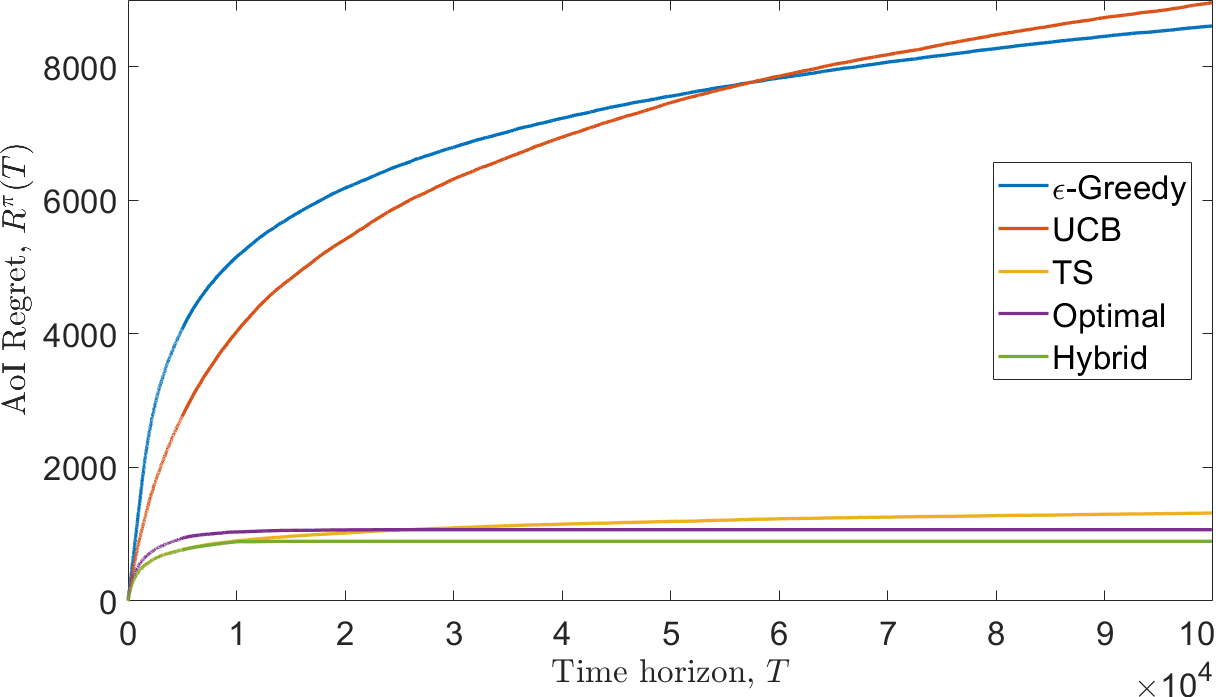}
\end{center}
\vspace{-0.4cm}
\caption{Simulation of a network with $\lambda=0.1$.} \label{fig.Regret010}
\vspace{-0.2cm}
\end{figure}

\begin{figure}[t]
\begin{center}
\includegraphics[width=0.9\columnwidth]{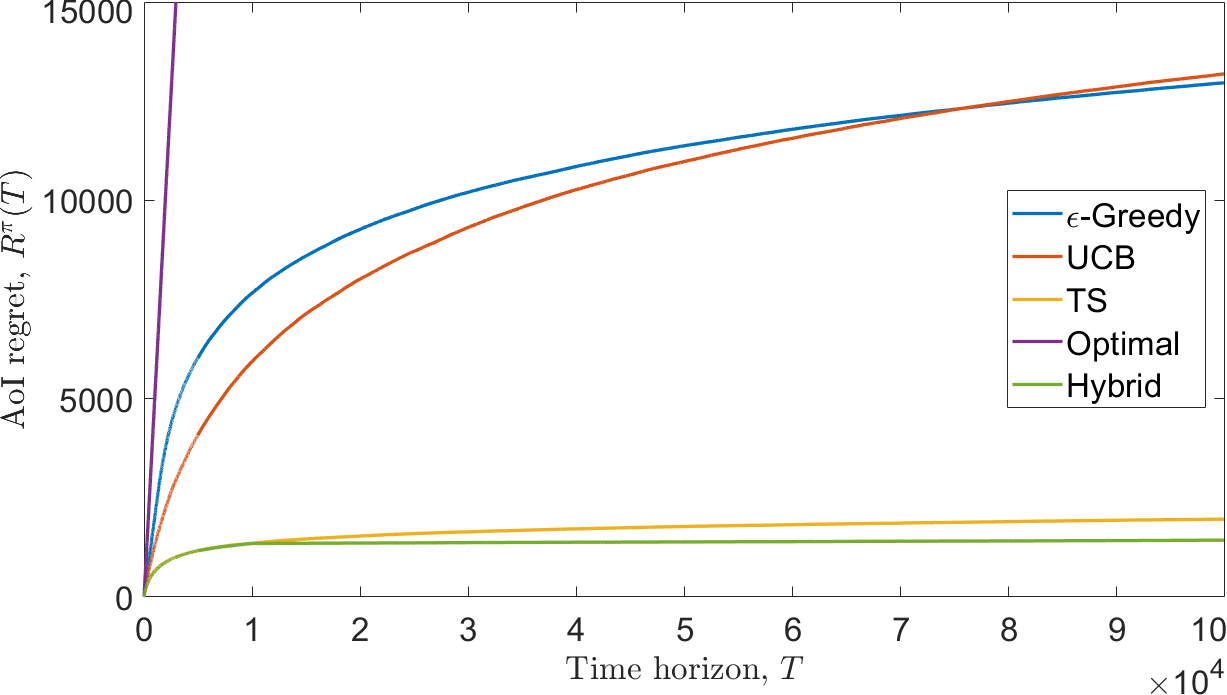}
\end{center}
\vspace{-0.4cm}
\caption{Simulation of a network with $\lambda=0.75$.} \label{fig.Regret075}
\vspace{-0.2cm}
\end{figure}

\begin{figure}[t]
\begin{center}
\includegraphics[width=0.9\columnwidth]{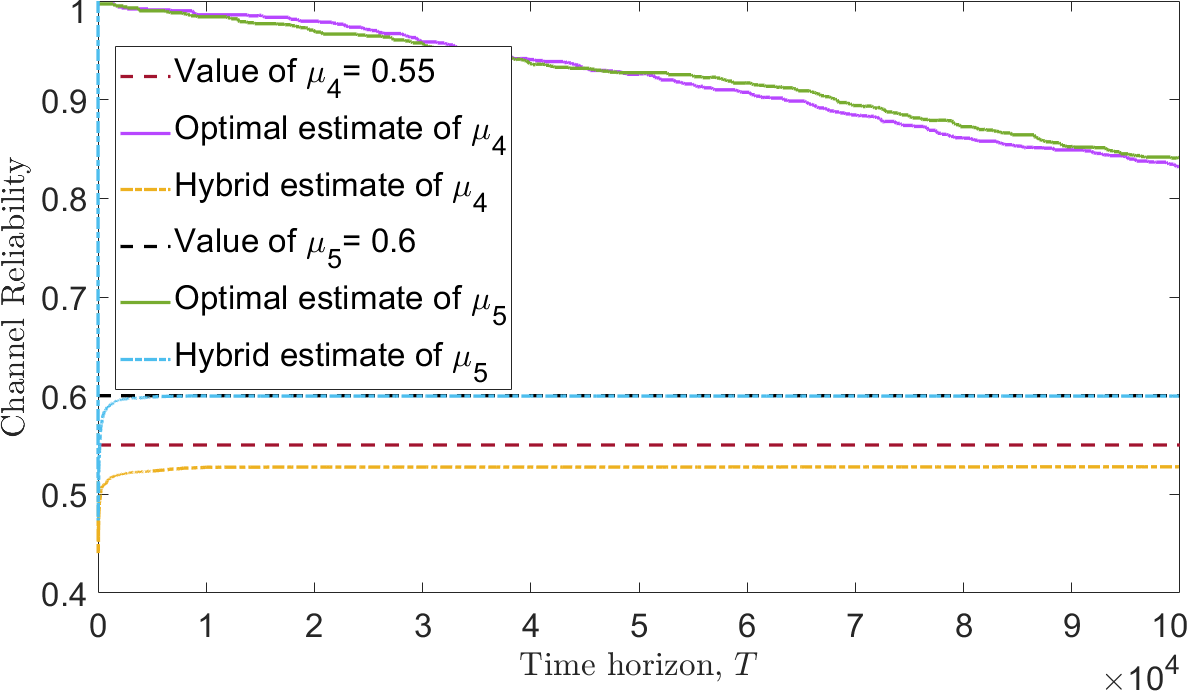}
\end{center}
\vspace{-0.4cm}
\caption{Simulation of a network with $\lambda=0.75$.} \label{fig.Estimates}
\vspace{-0.2cm}
\end{figure}


The results in Figs.~\ref{fig.Regret010} and \ref{fig.Regret075} suggest that, as expected, the AoI regret associated with Optimal and Hybrid is bounded, while the AoI regrets associated with $\epsilon$-Greedy, UCB and TS grow over time. By comparing the AoI regret of Optimal and TS in Figs.~\ref{fig.Regret010} and \ref{fig.Regret075}, it is clear that the AoI regret of the Optimal channel policy varies significantly with $\lambda$. In particular, for $T=10^5$, when $\lambda$ increases from $0.1$ to $0.75$, the AoI regret of TS increases by a factor of $1.5$ (from $1,318$ to $1,963$), while the AoI regret of Optimal increases by a factor of $491.0$ (from $1,068$ to $481,700$). 
%
%
%
%
A main reason for this performance degradation is that when $\lambda$ increases, empty systems with $E(t)=1$ occur less often and, as a result, the Optimal channel policy takes longer to learn the reliability of the channels, as can be seen in Fig.~\ref{fig.Estimates}. 
To improve the performance of the Optimal policy for networks with large $\lambda$, we propose a heuristic policy called Hybrid channel policy, which employs TS in the first $10^4$ slots to quickly learn the reliability of the channels, and then shifts to the Optimal policy which has bounded AoI regret in the long term. Figure~\ref{fig.Estimates} illustrates the difference in the learning rates between Optimal and Hybrid. 
Notice in Fig.~\ref{fig.Estimates} that there are extended periods of time in which the Optimal channel policy assigns a larger estimated reliability to a suboptimal channel, which leads to the large AoI regret shown in Fig.~\ref{fig.Regret075}. However, as established in Theorem~\ref{t3}, for a long enough time-horizon $T$, the Optimal policy will eventually converge to the true reliabilities, at which point the AoI regret will stop increasing.

\section{Conclusion}\label{sec.Conclusion}

This paper considers a single-hop wireless network with $M$ sources transmitting time-sensitive information to the destination over $N$ unreliable channels. Packets from each source are generated according to a Bernoulli process with known rate $\lambda$ and the state of channel $n$ (ON/OFF) varies according to a Bernoulli process with unknown rate $\mu_n$. The reliabilities $\vec\mu$ of the wireless channels is to be learned through observation. At every slot $t$, the learning algorithm selects a single pair $(m(t), n(t))$ and the selected source $m(t)$ attempts to transmit its packet via the selected channel $n(t)$. The goal of the learning algorithm is to minimize the expected total AoI $\bar h(T)$. To analyze the performance of the learning algorithm, we derive bounds on the AoI regret $R^\pi(T)$ associated with different learning algorithms. 
Our main contributions include: i) analyzing the performance of learning algorithms that employ channel policies based on traditional MAB algorithms ($\epsilon$-Greedy, UCB, and TS) and showing that their AoI regret scales as $\Theta(\log T)$; and ii) developing a novel learning algorithm and establishing that it has $O(1)$ AoI regret. To the best of our knowledge, this is the first learning algorithm with bounded AoI regret. Interesting extensions of this work include consideration of sources with unknown packet generation rates and channels with time-varying statistics.


\bibliographystyle{IEEEtran}
\bibliography{references}

\appendices

\section{Proof of Lemma~\ref{step1}}\label{newA}

To establish Lemma~\ref{step1}, we first show that \emph{optimal periods} do not contribute to the auxiliary AoI regret defined in \eqref{aux_regret}. Then, we obtain an upper bound on the contribution of \emph{suboptimal periods} by manipulating the expression of the auxiliary AoI regret.

\begin{lemma}\label{newl1}
	If period $p$ is an optimal period, then for any slot~$t$ within period~$p$ and for any source $m$, we have $h^{\hat\eta}_m(t) = h^{*}_m(t)$.
\end{lemma}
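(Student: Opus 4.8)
The plan is to prove the equality $h^{\hat\eta}_m(t) = h^{*}_m(t)$ throughout an optimal period in two stages: first establish that $\hat\eta$ and $\pi^*$ have \emph{identical} AoI at the start of every period (optimal or not), and then show that this equality is preserved across an entire optimal period because the two algorithms then make identical scheduling decisions. The first stage is where the empty-system structure does the essential work; the second is a direct slot-by-slot propagation. For notation, let $g_m(t)$ denote the generation time of the most recently generated packet of source $m$ up to slot $t$, which is common to both algorithms since the arrival process is coupled, and let $\tau^{\hat\eta}_m$ and $\tau^*_m$ denote the last-delivery times under $\hat\eta$ and $\pi^*$.

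For the first stage I would first establish a \emph{sample-path domination}: under the coupling of Remark~\ref{r1}, $h^{*}_m(t) \le h^{\hat\eta}_m(t)$ for every source $m$ and slot $t$, equivalently $\tau^{*}_m(t) \ge \tau^{\hat\eta}_m(t)$. The three facts that drive this are that $\hat\eta_a$ selects the same source $m^{*}(t)$ as $\pi^*$ in every slot, that both queues hold the same freshest packet whenever nonempty (the arrivals are common), and that the channel $n^*$ used by $\pi^*$ dominates any channel $\hat\eta$ uses, so by \eqref{coupling} whenever $\hat\eta$'s channel is ON so is $n^*$. An induction on $t$ then gives the domination: whenever $\hat\eta$ delivers a packet, $\pi^*$ either delivers the identical freshest packet in the same slot or has already delivered it, so $\pi^*$ never falls behind. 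I would then use this to obtain the \emph{reset property}: at the first slot $s_p$ of any period the system is empty under $\hat\eta$, so $\hat\eta$ has delivered the freshest packet of every source, giving $\tau^{\hat\eta}_m(s_p) = g_m(s_p)$, which is the minimum possible value. Since $\tau^{*}_m(s_p) \le g_m(s_p)$ always, the domination $\tau^{*}_m(s_p) \ge \tau^{\hat\eta}_m(s_p) = g_m(s_p)$ forces $\tau^{*}_m(s_p) = g_m(s_p)$, i.e.\ $h^{*}_m(s_p) = h^{\hat\eta}_m(s_p)$. A useful byproduct is that $\pi^*$'s system is empty exactly when $\hat\eta$'s is, so the two algorithms share the same period structure.

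For the second stage I would propagate equality through the optimal period $p$. During its empty phase there are no data packets, so both algorithms transmit only dummy packets and each AoI simply increments by one; starting equal at $s_p$, they remain equal. During the nonempty phase both algorithms select the same source (by construction of $\hat\eta_a$) and the same channel $n^*$ (since $\bar n(p)=n^*$ in an optimal period, while $\pi^*$ always uses $n^*$); under the coupling the two selected channels have identical state, so the transmission outcomes, and therefore the AoI updates, coincide. Hence equality is maintained up to $f_p$, which proves the lemma.

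The main obstacle I anticipate is making the domination induction fully rigorous, in particular the queue and dummy-packet bookkeeping. Because $\hat\eta$ is not work-conserving, it copies $\pi^*$'s source choice and may transmit a dummy when the chosen source's queue is empty under $\hat\eta$ but not under $\pi^*$, one must argue carefully that a data transmission is always performed against the identical freshest packet and that a successful $\hat\eta$ delivery is matched by $\pi^*$ delivering, or having already delivered, that same packet. Once the invariant $\tau^{*}_m(t) \ge \tau^{\hat\eta}_m(t)$ is secured, the minimality-at-empty-system step and the propagation step are straightforward.
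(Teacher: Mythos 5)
Your proof is correct and follows essentially the same route as the paper's: equality of the two AoI processes at the start of each period is forced by the empty-system condition together with the coupling (the paper phrases your domination invariant $\tau^{*}_m \ge \tau^{\hat\eta}_m$ as ``$\pi^*$ has already delivered the latest packet that $\hat\eta$ delivered''), and equality then propagates through an optimal period because both algorithms make identical source and channel choices on identical queue contents. The only blemish is calling $\tau^{\hat\eta}_m(s_p)=g_m(s_p)$ the ``minimum possible value'' --- it is the \emph{maximum} possible value of $\tau_m(s_p)$ (equivalently the minimum possible AoI) --- but the sandwich argument you then run, $g_m(s_p) \ge \tau^{*}_m(s_p) \ge \tau^{\hat\eta}_m(s_p) = g_m(s_p)$, is exactly the correct one.
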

\begin{proof}
	Consider the time slots preceding period $p$ in a network employing the auxiliary learning algorithm $\hat\eta$. In slot $s_p-1$, the algorithm $\hat\eta$ delivers the last packet in the system and in slot $s_p$ the system becomes empty, with $E(s_p)=1$. Let $t'_m$ be the slot in  which the latest packet generated from source $m$ was delivered by $\hat\eta$. It follows that $t'_m \leq s_p-1$ and source $m$ did not generate new packets during the time interval $[t'_m+1,s_p]$.
	
	Recall that (by construction) algorithms $\hat\eta$ and $\pi^*$ select the same source $m^*(t)$ at every slot $t$ and (due to the coupling argument in Remark~\ref{r1}) when a transmission by $\hat\eta$ is successful, the transmission by $\pi^*$ is also successful. Hence, it follows that algorithm $\pi^*$ also delivered the latest packet generated from source $m$ during (or before) slot $t'_m$,
	%
	which implies that $\tau_m^{\hat\eta}(s_p) = \tau_m^{\pi^*}(s_p)$ 
	and, as a result, we have
		$h^{\hat\eta}_m(s_p) = s_p - \tau_m^{\hat\eta}(s_p) = s_p - \tau_m^{\pi^*}(s_p) = h^{*}_m(s_p)$.
	
	Since $h^{\hat\eta}_m(s_p) = h^{*}_m(s_p)$ for every source $m$ and for the first slot of every period $p$, it follows that if period $p$ is an optimal period (in which $\hat\eta$ and $\pi^*$ select the same source and the same channel in every slot $t$) then 
	$h^{\hat\eta}_m(t)=h^{*}_m(t)$ in every slot $t$ within period $p$ and for every source $m$.
\end{proof}

From Lemma \ref{newl1}, we have that if period $p$ is an optimal period, then
\begin{equation}
	\sum_{m=1}^{M} \sum_{t=s_p}^{f_p} h^{\hat\eta}_m(t) = \sum_{m=1}^{M} \sum_{t=s_p}^{f_p} h^{*}_m(t) \; .
\end{equation}
Using this result in the expression of the auxiliary AoI regret, we obtain
\begin{align}
	\mathrm{E}&\left[ \sum_{m=1}^{M} \sum_{t=1}^{T} \left[ h^{\hat\eta}_m(t) - h^{*}_m(t)\right] \right] = \nonumber\\
	&= \mathrm{E}\left[ \sum_{m=1}^{M} \sum_{t=1}^{T} \underbrace{ \left[ h^{\hat\eta}_m(t) - h^{*}_m(t)\right] }_{\le h^{\hat\eta}_m(t)} \mathbbm{1}\left\{n(t)\neq n^* \cap E(t)=0 \right\}\right.+\nonumber\\
	&\qquad + \left. \underbrace{ \left[ h^{\hat\eta}_m(t) - h^{*}_m(t)\right] \mathbbm{1}\left\{n(t)=n^*  \cup E(t)=1 \right\} }_{=0} \right]\nonumber
\end{align}	
\begin{align}
	&\overset{(a)}{\le} \mathrm{E}\left[ \sum_{m=1}^{M} \sum_{p=1}^{T} y^{\hat\eta}_m(p)\ \mathbbm{1}\left\{\bar n(p)\neq n^*\right\} \right]\nonumber\\
	&= \sum_{m=1}^{M} \sum_{p=1}^{T} \mathrm{E}\left[ y^{\hat\eta}_m(p)\ \mathbbm{1}\left\{\bar n(p)\neq n^*\right\} \right]\nonumber\\
	&\overset{(b)}{=} \sum_{m=1}^{M} \sum_{p=1}^{T} \mathrm{E}\left[ y^{\hat\eta}_m(p) \;\middle\vert\; \bar n(p)\neq n^* \right] \mathrm{P}\left(\bar n(p)\neq n^*\right)
\end{align}
where (a) follows from the fact that each period $p$ has duration of at least $1$ time slot, and (b) follows from the law of total expectation. 

\section{Proof of Lemma \ref{step3}}\label{newC}

To establish Lemma~\ref{step3}, we first use Hoeffding's inequality to upper bound $\mathrm{P}\left(\bar n(p) = n\right)$ by an exponential function of $-p$, for every suboptimal channel $n$. The result then follows directly from this upper bound.

\begin{lemma}\label{hoeffding}
	For every suboptimal channel index $n\neq n^*$, the probability of channel policy $\bar\eta_b$ selecting channel $n$ in the nonempty phase of period $p$ is bounded by
	\begin{equation}
		\mathrm{P}\left( \bar n(p)=n \right) \le 2\exp\left(-\dfrac{1}{2N^2}p\right) + 2\exp\left(-\dfrac{\Delta_n^2}{4N}p\right) \; ,
	\end{equation}
	where $\Delta_n = \mu^* - \mu_n$.
\end{lemma}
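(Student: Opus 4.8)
The plan is to reduce the selection event to a comparison between the empirical reliabilities of channel $n$ and the optimal channel $n^*$, and then to control the deviations of these empirical means via Hoeffding's inequality. Since $\bar\eta_b$ picks $\bar n(p) = \argmax_{n'} \hat\mu_{n'}$ at the start of the nonempty phase of period $p$, the event $\{\bar n(p) = n\}$ implies $\hat\mu_n \ge \hat\mu_{n^*}$, where $\hat\mu_n$ and $\hat\mu_{n^*}$ are the estimates available at that time. Introducing the midpoint $\mu_n + \Delta_n/2 = \mu^* - \Delta_n/2$, at least one of the events $\{\hat\mu_n \ge \mu_n + \Delta_n/2\}$ or $\{\hat\mu_{n^*} \le \mu^* - \Delta_n/2\}$ must occur, since otherwise $\hat\mu_n < \mu_n + \Delta_n/2 = \mu^* - \Delta_n/2 < \hat\mu_{n^*}$, a contradiction. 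Hence $\mathrm{P}(\bar n(p)=n) \le \mathrm{P}(\hat\mu_n \ge \mu_n + \Delta_n/2) + \mathrm{P}(\hat\mu_{n^*} \le \mu^* - \Delta_n/2)$, and it suffices to bound each term by $\exp(-p/(2N^2)) + \exp(-\Delta_n^2 p/(4N))$.

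Next I would count the observations. Recall that $\bar\eta_b$ updates its estimates only during empty slots, that each of the periods $1,\dots,p$ contains at least one empty slot, and that the estimates used to select $\bar n(p)$ include all observations through the empty phase of period $p$. Therefore at least $p$ channel observations have been collected, each assigning its sample to a channel chosen uniformly at random and independently of the (i.i.d.) channel states. The main difficulty is that the number of samples $T_n$ collected for a fixed channel $n$ is itself random, so Hoeffding's inequality cannot be applied with a deterministic sample count.

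To handle this, I would split on the event $\{T_n \ge p/(2N)\}$. Since $T_n$ stochastically dominates a $\mathrm{Bin}(p,1/N)$ random variable with mean $p/N$, a Hoeffding bound on the sample count gives $\mathrm{P}(T_n < p/(2N)) \le \exp(-p/(2N^2))$, which produces the first exponential term. On the complementary good event, conditioning on $T_n = k \ge p/(2N)$ makes $\hat\mu_n$ the empirical mean of $k$ i.i.d.\ $\mathrm{Bernoulli}(\mu_n)$ samples, so Hoeffding's inequality yields $\mathrm{P}(\hat\mu_n \ge \mu_n + \Delta_n/2 \mid T_n = k) \le \exp(-k\Delta_n^2/2) \le \exp(-\Delta_n^2 p/(4N))$, the second exponential term. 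Summing the two contributions bounds $\mathrm{P}(\hat\mu_n \ge \mu_n + \Delta_n/2)$ by $\exp(-p/(2N^2)) + \exp(-\Delta_n^2 p/(4N))$, and the identical argument applied to $n^*$ bounds $\mathrm{P}(\hat\mu_{n^*} \le \mu^* - \Delta_n/2)$ by the same quantity. Adding these two bounds gives the claimed inequality, with the factor of $2$ arising from the two symmetric deviation events.
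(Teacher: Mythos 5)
Your proposal is correct and follows essentially the same route as the paper's proof: a union bound around the midpoint $(\mu_n+\mu^*)/2$, a split on whether the (random) number of samples of the channel reaches half its expectation $p'/(2N)$, and Hoeffding's inequality applied once to the sample count and once to the conditional empirical mean, yielding exactly the two exponential terms. The only cosmetic difference is that the paper works with the total number of exploration slots $p'\ge p$ and relaxes to $p$ at the end, whereas you lower-bound the count by $p$ from the outset.
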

\begin{proof}
	The channel policy $\bar\eta_b$ uses the empty phases of each period $p$ to explore the channels and update its estimate of the channel reliabilities. We know that at the beginning of each period $p$ there is an empty phase with at least one exploration slot. Let $N_n(p)$ be the number of exploration slots in which channel $n$ is selected within the first $p$ periods and let $p'$ be the total number of exploration slots within the first $p$ periods. It follows that
	\begin{equation}
		\sum_{n=1}^{N} N_n(p) = p' \ge p \; \mbox{ and } \; \mathrm{E}\left[ N_n(p) \right] = \dfrac{1}{N}p' \; .
	\end{equation}
	
	
	
	Let $\hat\mu_n(p')$ be the estimate of the reliability of channel $n$ after a total of $p'$ exploration slots. Then, for any suboptimal channel $n\neq n^*$, we have
	\begin{align}
		&\mathrm{P}\left( \bar{n}(p)=n \right) \le \mathrm{P}\left( \hat\mu_n(p') \ge \hat\mu^*(p') \right)\nonumber\\
		&\quad \overset{(a)}{\le} \mathrm{P}\left( \hat\mu_n(p')\ge\dfrac{\mu_n+\mu^*}{2} \right) + \mathrm{P}\left( \dfrac{\mu_n+\mu^*}{2}\ge\hat\mu^*(p') \right)\nonumber\\
		&\quad = \mathrm{P}\left( \hat\mu_n(p')-\mu_n\ge\dfrac{\Delta_n}{2} \right) + \mathrm{P}\left( \hat\mu^*(p')-\mu^* \le -\dfrac{\Delta_n}{2} \right) \; .\label{newprobbound1}
	\end{align}
	where (a) follows from the union bound. Denote $\hat\mu_n(p')-\mu_n=I_n$. Then,
	\begin{align}
		&\mathrm{P} \left( \hat\mu_n(p')-\mu_n\ge\dfrac{\Delta_n}{2} \right)\nonumber\\
		& = \mathrm{P}\left( I_n\ge\dfrac{\Delta_n}{2} ,\, N_n(p)\le\dfrac{p'}{2N} \right) + \mathrm{P}\left( I_n>\dfrac{\Delta_n}{2} ,\, N_n(p)>\dfrac{p'}{2N} \right)\nonumber\\
		& \le \mathrm{P}\left( N_n(p)\le\dfrac{p'}{2N} \right) + \mathrm{P}\left( I_n>\dfrac{\Delta_n}{2} \;\middle\vert\; N_n(p)>\dfrac{p'}{2N} \right) \; .\label{newhoeffding1}
	\end{align}
	Next, we upper bound each of the last two terms by Hoeffding's inequality. 
	
	Notice that:
	\begin{itemize}
		\item $N_n(p)$ is the sum of $p'$ i.i.d.\ Bernoulli random variables with mean $\dfrac{1}{N}$; and
		\item $\hat\mu_n(p')$ is the average of $N_n(p)$ i.i.d.\ Bernoulli random variables with mean $\mu_n$.
	\end{itemize}
	
	Thus, by Hoeffding's inequality
	\begin{align}
		\mathrm{P}\left( N_n(p)\le\dfrac{1}{2N}p' \right) &= \mathrm{P}\left( \dfrac{N_n(p)}{p'}-\dfrac{1}{N}\le-\dfrac{1}{2N} \right)\nonumber\\
		&\le \exp\left(-\dfrac{1}{2N^2}p'\right)\label{newhoeffding2}
	\end{align}
	and
	\begin{align}
		\mathrm{P}\left( I_n>\dfrac{\Delta_n}{2} \;\middle\vert\; N_n(p)>\dfrac{1}{2N}p' \right)
		&\le \exp\left(-\dfrac{\Delta_n^2}{4N}p'\right) \; .\label{newhoeffding3}
	\end{align}
	Inequalities \eqref{newhoeffding1}, \eqref{newhoeffding2} and \eqref{newhoeffding3} imply
	\begin{align}
		&\mathrm{P} \left( \hat\mu_n(p')-\mu_n\ge\dfrac{\Delta_n}{2} \right) \le \exp\left(-\dfrac{1}{2N^2}p'\right) + \exp\left(-\dfrac{\Delta_n^2}{4N}p'\right).\label{newprobbound2}
	\end{align}
	Analogously, we have
	\begin{align}
		&\mathrm{P} \left( \hat\mu^*(p')-\mu^*\le\dfrac{\Delta_n}{2} \right) \le \exp\left(-\dfrac{1}{2N^2}p'\right) + \exp\left(-\dfrac{\Delta_n^2}{4N}p'\right) \; .\label{newprobbound3}
	\end{align}
	Now, inequalities \eqref{newprobbound1}, \eqref{newprobbound2} and \eqref{newprobbound3} imply that
	\begin{align}
		\mathrm{P}\left( \bar{n}(p')=n \right) &\le 2\exp\left(-\dfrac{1}{2N^2}p'\right) + 2\exp\left(-\dfrac{\Delta_n^2}{4N}p'\right)\nonumber\\
		&\le 2\exp\left(-\dfrac{1}{2N^2}p\right) + 2\exp\left(-\dfrac{\Delta_n^2}{4N}p\right)
	\end{align}
\end{proof}
Using Lemma~\ref{hoeffding}, we obtain
\begin{align}
	&\sum_{p=1}^{T} \mathrm{P}\left(\bar n(p)\neq n^*\right) = \sum_{n\neq n^*} \sum_{p=1}^{T} \mathrm{P}\left(\bar n(p)= n\right)\nonumber\\
	&\quad \le \sum_{n\neq n^*} \sum_{p=1}^{T} \left[ 2\exp\left(-\dfrac{1}{2N^2}p\right) + 2\exp\left(-\dfrac{\Delta_n^2}{4N}p\right) \right]\nonumber\\
	&\quad \le \sum_{n\neq n^*} \sum_{p=1}^{\infty} \left[ 2\exp\left(-\dfrac{1}{2N^2}p\right) + 2\exp\left(-\dfrac{\Delta_n^2}{4N}p\right) \right]\nonumber\\
	&\quad = \sum_{n\neq n^*} \left[ \dfrac{2}{\exp\left(\dfrac{1}{2N^2}\right)-1} + \dfrac{2}{\exp\left(\dfrac{\Delta_n^2}{4N}\right)-1} \right]\nonumber\\
	&\quad \le \dfrac{2(N-1)}{\exp\left(\dfrac{1}{2N^2}\right)-1} + \dfrac{2(N-1)}{\exp\left(\dfrac{\Delta_{min}^2}{4N}\right)-1}
\end{align}
which is a constant, where $\Delta_{min} = \mu^* - \max\limits_{n\neq n^*}\mu_{n}$. 

\end{document}